\newcommand{\etal}{{\emph{et al.}\xspace}}
\newcommand{\void}[1]{}
\title{Angle-Restricted Steiner Arborescences\\ for Flow Map Layout\thanks{A preliminary version of this paper will appear at the 22nd International Symposium on Algorithms and Computation (ISAAC 2011). B. Speckmann and K. Verbeek are supported by the Netherlands Organisation for Scientific Research (NWO) under project no.~639.022.707.}}
\author{Kevin Buchin \and Bettina Speckmann \and Kevin Verbeek}
\date{
Dep. of Mathematics and Computer Science, TU Eindhoven, The Netherlands.\\ {\tt k.a.buchin@tue.nl} \qquad {\tt speckman@win.tue.nl} \qquad {\tt k.a.b.verbeek@tue.nl}
}
\begin{document}

\maketitle

\begin{abstract}
We introduce a new variant of the geometric Steiner arborescence problem, motivated by the layout of flow maps. Flow maps show the movement of objects between places. They reduce visual clutter by bundling lines smoothly and avoiding self-intersections. To capture these properties, our \emph{angle-restricted Steiner arborescences}, or \emph{flux trees}, connect several targets to a source with a tree of minimal length whose arcs obey a certain restriction on the angle they form with the source.

We study the properties of optimal flux trees and show that they are planar and consist of logarithmic spirals and straight lines. Flux trees have the \emph{shallow-light property}. We show that computing optimal flux trees is NP-hard. Hence we consider a variant of flux trees which uses only logarithmic spirals. \emph{Spiral trees} approximate flux trees within a factor depending on the angle restriction. Computing optimal spiral trees remains NP-hard, but we present an efficient 2-approximation, which can be extended to avoid ``positive monotone'' obstacles.
\end{abstract}

\section{Introduction}\label{sec:introduction}

Flow maps are a method used by cartographers to visualize the movement of objects between places~\cite{Dent1999,Slocum2010}. One or more sources are connected to several targets by arcs whose thickness corresponds to the amount of flow between a source and a target. Good flow maps share some common properties. They reduce visual clutter by merging (bundling) lines as smoothly and frequently as possible. Furthermore, they strive to avoid crossings between lines. \emph{Flow trees}, that is, single-source flows, are drawn entirely without crossings. Flow maps that depict trade often route edges along actual shipping routes. In addition, flow maps try to avoid covering important map features with flows to aid recognizability. Most flow maps are still drawn by hand and none of the existing algorithms (that use edge bundling), can guarantee to produce crossing-free flows.

\begin{figure}[t]
\centering
\includegraphics[height=4.5cm]{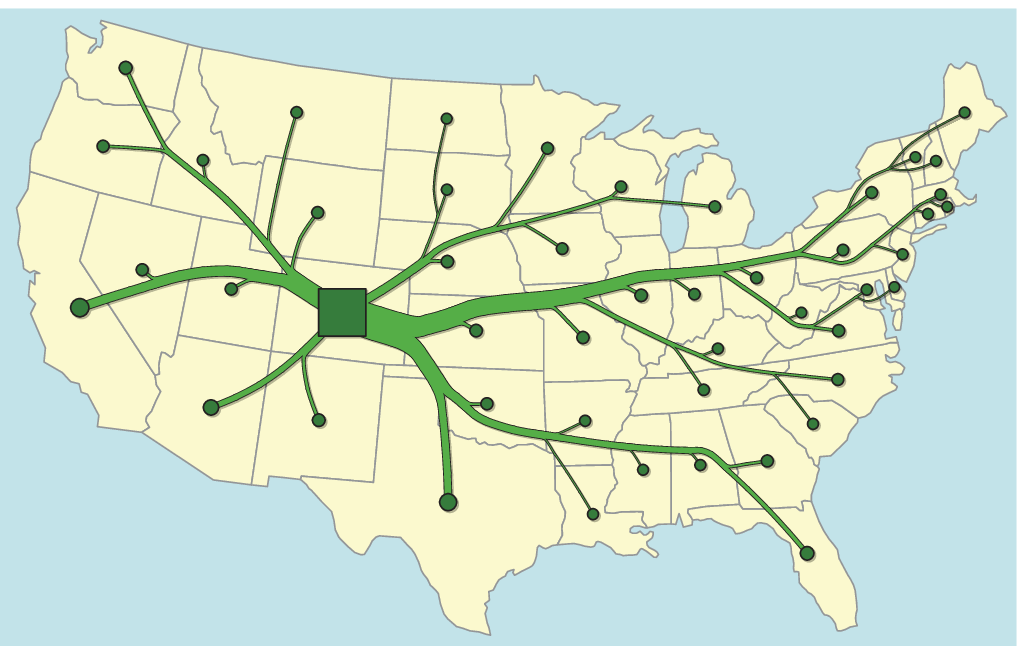}
\hfill
\includegraphics[height=4.5cm]{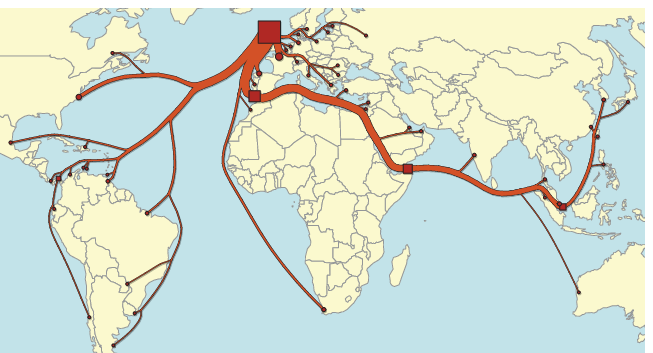}
\caption{Flow maps from our companion paper~\cite{InfoVisFlowMap} based on angle-restricted Steiner arborescences: Migration from Colorado and whisky exports from Scotland.}
\label{fig:flowmaps}
\end{figure}
In this paper we introduce a new variant of geometric minimal \emph{Steiner arborescences}, which captures the essential structure of flow trees and serves as a ``skeleton'' upon which to build high-quality flow trees. Our input consists of a point $r$, the \emph{root} (source), and $n$ points $t_1, \ldots, t_n$, the \emph{terminals} (targets). Visually appealing flow trees merge quickly, but smoothly. A geometric minimal Steiner arborescence on our input would result in the shortest possible tree, which naturally merges quickly. A Steiner arborescence for a given root and a set of terminals is a rooted directed \emph{Steiner tree}, which contains all terminals and where all edges are directed away from the root. Without additional restrictions on the edge directions (as in the rectilinear case or in the variant proposed in this paper), a geometric Steiner arborescence is simply a geometric Steiner tree with directed edges. However, Steiner arborescences have angles of $2\pi/3$ at every internal node and hence are quite far removed from the smooth appearance of hand-drawn flow maps. Our goal is hence to connect the terminals to the root with a Steiner tree of minimal length whose arcs obey a certain restriction on the angle they form with the root.

\begin{wrapfigure}[7]{r}{0.35\textwidth}
   \centering
   \includegraphics{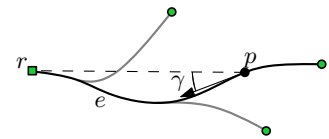}
   \small{\caption{The angle restriction. \label{fig:restriction}}}
\end{wrapfigure}
Specifically, we use a \emph{restricting angle} $\alpha < \pi/2$ to control the direction of the arcs of a Steiner arborescence $T$. Consider a point $p$ on an arc $e$ from a terminal to the root (see Figure~\ref{fig:restriction}). Let $\gamma$ be the angle between the vector from $p$ to the root $r$ and the tangent vector of $e$ at $p$. We require that $\gamma \leq \alpha$ for all points $p$ on $T$. We refer to a Steiner arborescence that obeys this angle restriction as \emph{angle-restricted Steiner arborescence}, or simply \emph{flux tree}.
Here and in the remainder of the paper it is convenient to direct flux trees from the terminals to the root. Also, to simplify descriptions, we often identify the nodes of a flux tree $T$ with their locations in the plane.%

In the context of flow maps it is important that flux trees can avoid obstacles, which model important features of the underlying geographic map. Furthermore, it is undesirable that terminals become internal nodes of a flux tree. We can ensure that our trees never pass directly through terminals by placing a small triangular obstacle just behind  each terminal (as seen from the root). Hence our input also includes a set of $m$ obstacles $B_1,\ldots, B_m$. We denote the total complexity (number of vertices) of all obstacles by $M$. In the presence of obstacles our goal is to find the shortest flux tree $T$ that is planar and avoids the obstacles.

The edges of flux trees are by definition ``thin'', but their topology and general structure are very suitable for flow trees. In a companion paper~\cite{InfoVisFlowMap} we describe an algorithm that thickens and smoothes a given flux tree while avoiding obstacles. Figure~\ref{fig:flowmaps} shows two examples of the maps computed with our algorithm, further examples and a detailed discussion of our maps can be found in~\cite{InfoVisFlowMap}.

\smallskip\noindent
{\bfseries Related work.} There is a multitude of related work on both the practical and the theoretical side of our problem and consequently we cannot cover it all.

One of the first systems for the automated creation of flow maps was developed by Tobler in the 1980s~\cite{FlowMapper,WaldoTobler1987}. His system does not use edge bundling and hence the resulting maps suffer from visual clutter. In 2005 Phan~\etal~\cite{Phan2005} presented an algorithm, based on hierarchical clustering of the terminals, which creates flow trees with bundled edges. This algorithm uses an iterative ad-hoc method to route edges and is often unable to avoid crossings. A second effect of this method is that flows are often routed along counterintuitive routes. The quality of the maps can be improved by moving the terminals, which, however, is considered to be confusing for users by cartography textbooks~\cite{Slocum2010}. Recent papers from the information visualization community explore alternative ways to visualize flows, by using multi-view displays~\cite{Guo2009}, animations over time~\cite{Boyandin2010}, or mapping techniques close to treemaps~\cite{Wood2010}.

There are many variations on the classic Steiner tree problem which employ metrics that are related to their specific target applications. Of particular relevance to this paper is the \emph{rectilinear Steiner arborescence} (RSA) problem, which is defined as follows. We are given a root (usually at the origin) and a set of terminals $t_1, \ldots, t_n$ in the northeast quadrant of the plane. The goal is to find the shortest rooted rectilinear tree $T$ with all edges directed away from the root, such that $T$ contains all points $t_1, \ldots, t_n$. For any edge of $T$ from $p = (x_p, y_p)$ to $q = (x_q, y_q)$ it must hold that $x_p \leq x_q$ and $y_p \leq y_q$. If we drop the condition of rectilinearity then we arrive at the \emph{Euclidean Steiner arborescence} (ESA) problem. In both cases it is  NP-hard~\cite{Shi2000,ss-rsap-05} to compute a tree of minimum length. Rao \etal~\cite{Rao92} give a simple $2$-approximation algorithm for minimum rectilinear Steiner arborescences. C\'{o}rdova and Lee~\cite{Cordova94} describe an efficient heuristic which works for terminals located anywhere in the plane. Ramnath~\cite{Ramnath03} presents a more involved $2$-approximation that can also deal with rectangular obstacles.  Finally, Lu and Ruan~\cite{Lu2000} developed a PTAS for minimum rectilinear Steiner arborescences, which is, however, more of theoretical than of practical interest.

Conceptually related are \emph{gradient-constrained minimum networks} which are studied by Brazil \etal~\cite{Brazil2001,Brazil2007} motivated by the design of
underground mines. Gradient-constrained minimum networks are minimum Steiner trees in three-dimensional space, in which the (absolute) gradients of all edges are no more than an upper bound $m$ (so that heavy mining trucks can still drive up the ramps modeled by the Steiner tree). Krozel~\etal~\cite{Krozel2006} study algorithms for turn-constrained routing with thick edges in the context of air traffic control. Their paths need to avoid obstacles (bad weather systems) and arrive at a single target (the airport). The union of consecutive paths bears some similarity with flow maps, although it is not necessarily crossing-free or a tree.

\smallskip\noindent
{\bfseries Results and organization.} In Section~\ref{sec:props} we derive properties of optimal (minimum length) flux trees. In particular, we show that they are planar and that the arcs of optimal flux trees consist of (segments of) logarithmic spirals and straight lines.
Flux trees have the \emph{shallow-light property}~\cite{Awerbuch1990}, that is, we can bound the length of an optimal flux tree in comparison with a minimum spanning tree on the same set of terminals and we can give an upper bound on the length of a path between any point in a flux tree and the root.
They also naturally induce a clustering on the terminals and smoothly bundle lines.
Unfortunately we can show
that it is NP-hard (Section~\ref{sec:npproof}) to compute optimal flux trees. Hence, in Section~\ref{sec:spiraltrees} we introduce a variant of flux trees, so called \emph{spiral trees}. The arcs of spiral trees consist only of logarithmic spiral segments. We prove that spiral trees approximate flux trees within a factor depending on the restricting angle $\alpha$. Our experiments show that $\alpha = \pi/6$ is a reasonable restricting angle, in this case the approximation factor is \mbox{$\sec(\alpha) \approx 1.15$}. In Section~\ref{sec:npproof} we show that computing optimal spiral trees remains NP-hard. For a special case, we give an exact algorithm in Section~\ref{sec:emptyregions} that runs in $O(n^3)$ time. In Section~\ref{sec:approximation} we develop a 2-approximation algorithm for spiral trees that works in general and runs in $O(n \log n)$ time. Finally, in Section~\ref{sec:obstacles} we extend our approximation algorithm (without deteriorating the approximation factor) to include ``positive monotone'' obstacles. On the way, we develop a new 2-approximation algorithm for rectilinear Steiner arborescences in the presence of positive monotone obstacles. Both algorithms run in $O((n+M) \log(n+M))$ time, where $M$ is the total complexity of all obstacles.

\section{Optimal flux trees} \label{sec:props}

Recall that our input consists of a root $r$, terminals $t_1, \ldots, t_n$, and a restricting angle $\alpha < \pi/2$. Without loss of generality we assume that the root lies at the origin. Recall further that an optimal flux tree is a geometric Steiner arborescence, whose arcs are directed from the terminals to the root and that satisfies the angle restriction. We show that the arcs of an optimal flux tree consist of line segments and parts of logarithmic spirals (Property~\ref{property:optedgeshape}), that any node except for the root has at most two incoming arcs (Property~\ref{property:optbinary}), and that an optimal flux tree is planar (Property~\ref{property:optplanar}). Finally, flux trees (and also spiral trees) have the shallow-light property (Property~\ref{property:shallowlight}).

\bigskip

\begin{wrapfigure}[10]{r}{.4\textwidth}
  \centering
  \includegraphics{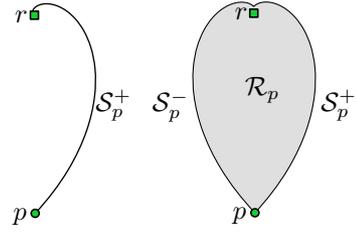}
  \small{\caption{Spirals and spiral regions.\label{fig:Spirals}}}
\end{wrapfigure}
\smallskip\noindent{\bfseries Spiral regions.} For a point $p$ in the plane, we consider the region $\mathcal{R}_p$ of all points that are \emph{reachable} from $p$ with an \emph{angle-restricted} path, that is, with a path that satisfies the angle restriction.
Clearly, the root $r$ is always in $\mathcal{R}_p$. The boundaries of $\mathcal{R}_p$ consist of curves that follow one of the two directions that form exactly an angle $\alpha$ with the direction towards the root. Curves with this property are known as \emph{logarithmic spirals} (see Figure~\ref{fig:Spirals}). Logarithmic spirals are self-similar; scaling a logarithmic spiral results in another logarithmic spiral. Logarithmic spirals are also self-approaching as defined by Aichholzer \etal~\cite{Aichholzer2001}, who give upper bounds on the lengths of (generalized) self-approaching curves. As all spirals in this paper are logarithmic, we simply refer to them as \emph{spirals}. For $\alpha < \pi/2$ there are two spirals through a point. The \emph{right spiral} $\mathcal{S}^{+}_p$ is given by the following parametric equation in polar coordinates, where $p = (R, \phi)$: $R(t) = R e^{-t}$ and $\phi(t) = \phi + \tan(\alpha) t$.
The parametric equation of the \emph{left spiral} $\mathcal{S}^{-}_p$ is the same with $\alpha$ replaced by $-\alpha$. Note that a right spiral $\mathcal{S}^{+}_p$ can never cross another right spiral $\mathcal{S}^{+}_q$ (the same holds for left spirals).
The spirals $\mathcal{S}^{+}_p$ and $\mathcal{S}^{-}_p$ cross infinitely often. The reachable region $\mathcal{R}_p$ is bounded by the parts of $\mathcal{S}^{+}_p$ and $\mathcal{S}^{-}_p$ with
$0 \leq t \leq \pi \cot(\alpha)$. We therefore call $\mathcal{R}_p$ the \emph{spiral region} of $p$. It follows directly from the definition that for all $q \in \mathcal{R}_p$ we have that $\mathcal{R}_q \subseteq \mathcal{R}_p$.

\begin{lemma}
\label{lem:shortestpath}
The shortest angle-restricted path between a point $p$ and a point $q \in \mathcal{R}_p$ consists of a straight segment followed by a spiral segment. Either segment can have length zero.
\end{lemma}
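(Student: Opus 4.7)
The plan is to work in polar coordinates $(\rho, \phi)$ centered at $r$ and solve the problem via a calculus-of-variations argument (equivalently, optimal control with a bang-bang structure). Since $\alpha < \pi/2$, the angle restriction forces the tangent of any angle-restricted path to have a strictly positive inward radial component, so $\rho$ decreases strictly monotonically along the path. This lets me reparametrize by $\rho$ and write the length as $\int_{\rho_q}^{\rho_p} \sec\gamma(\rho)\, d\rho$, while the constraint on the total angular change becomes $\int_{\rho_q}^{\rho_p} \tan\gamma(\rho)/\rho\, d\rho = \Delta\phi$, where $\gamma \in [0, \alpha]$ is the (unsigned) angle between the tangent and the direction to $r$. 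Before invoking variations I would argue that the direction of twist is constant along an optimal path, since any backtracking arc can be locally shortened by a same-direction replacement.

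Next I would introduce a Lagrange multiplier $\lambda$ and reduce the problem to the pointwise minimization of $\sec\gamma - \lambda \tan\gamma/\rho$ over $\gamma \in [0, \alpha]$. A short derivative computation identifies the unique interior critical point as $\sin\gamma = \lambda/\rho$, which is a minimum. Hence the pointwise minimum is attained at this interior value as long as $\rho \geq \lambda/\sin\alpha$, and otherwise it is attained at the boundary $\gamma = \alpha$. Since $\rho$ decreases along the path, the optimal profile is of ``interior type'' on the outer portion (near $p$) and of ``boundary type'' $\gamma = \alpha$ on the inner portion (near $q$), with a single switch.

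The geometric identification of the two regimes completes the proof. The quantity $\rho \sin\gamma$ is the perpendicular distance from $r$ to the tangent line of the curve at the current point, so the interior condition $\rho\sin\gamma = \lambda$ says that every tangent of the initial segment has the same perpendicular distance to $r$ — this characterizes a single straight line. The boundary condition $\gamma = \alpha$ is, by Lemma's setup and the definition of $\mathcal{S}^\pm$, exactly the defining property of a logarithmic spiral. The two degenerate cases are recovered by taking $\lambda = 0$, where the entire path is the straight segment from $p$ to $q$ (possible precisely when that segment satisfies the angle restriction), and by $\lambda \geq \rho_p \sin\alpha$, where the straight portion has length zero and the path is a spiral segment (possible precisely when $q$ lies on a spiral through $p$).

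The main obstacle I expect is the rigorous treatment of the variational argument with the one-sided inequality constraint $\gamma \leq \alpha$: one must either appeal to Pontryagin's maximum principle or carry out a direct perturbation argument showing that any path deviating from the straight-then-spiral profile admits a strictly shortening local modification. A secondary subtlety is justifying that the switch from interior to boundary occurs exactly once and in the specified order, i.e., ruling out chattering between the two regimes; this falls out of the monotonicity in $\rho$ of the threshold $\rho = \lambda/\sin\alpha$, but it needs to be verified carefully. The ``WLOG constant twist direction'' step, while intuitive, also deserves a short explicit exchange argument.
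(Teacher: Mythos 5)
Your proposal is sound and reaches the right answer by a genuinely different route from the paper. The paper's proof is purely geometric and very short: it observes that the set of points that can reach $q$ is bounded by the backward extensions of $\mathcal{S}^{+}_q$ and $\mathcal{S}^{-}_q$, treats these spirals as obstacles, notes that the taut-string shortest path around such an obstacle is a straight segment followed by a segment hugging the spiral, and then checks feasibility of that path using the fact that two right (or two left) spirals never cross. You instead \emph{derive} the straight-then-spiral structure from first principles via a constrained variational problem, and your computations check out: reparametrization by $\rho$ is legitimate because $\alpha<\pi/2$ forces $\rho$ to decrease strictly; the pointwise minimizer of $\sec\gamma-(\lambda/\rho)\tan\gamma$ is indeed $\sin\gamma=\lambda/\rho$ (one finds $f'(\gamma)=(\sin\gamma-\lambda/\rho)/\cos^2\gamma$, so it is a minimum and the boundary value $\gamma=\alpha$ takes over exactly when $\rho<\lambda/\sin\alpha$); monotonicity of $\rho$ then gives a single switch with the straight part on the outside, matching the lemma's order; and the identification of the interior regime with a straight line via the constant pedal distance $\rho\sin\gamma=\lambda$ is correct in the relevant range $\gamma<\pi/2$ (differentiating the support-line relation gives $h'=-\kappa\,\rho\cos\gamma$, so $h'\equiv 0$ forces $\kappa\equiv 0$). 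What your approach buys is more information --- the explicit switching radius $\lambda/\sin\alpha$ and a full optimal synthesis foliating $\mathcal{R}_p$ --- at the cost of heavier machinery; the paper's argument is more elementary and avoids existence and regularity questions entirely. One suggestion for closing the gaps you flag: rather than invoking necessary conditions (Pontryagin or a multiplier rule, which require existence of a minimizer and a constraint qualification), run the argument as a verification: for any $\lambda\ge 0$ and any feasible path, $\int\sec\gamma\,d\rho\ \ge\ \int\min_{|\gamma|\le\alpha}\bigl(\sec\gamma-(\lambda/\rho)\tan\gamma\bigr)\,d\rho+\lambda\,\Delta\phi$, and the straight-then-spiral path attains equality for the value of $\lambda$ that makes it pass through $q$. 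Allowing signed $\gamma\in[-\alpha,\alpha]$ in this bound also disposes of the ``constant twist direction'' step for free, since $\sec$ is even and $\tan$ is odd, so the pointwise minimum is never attained at a backtracking angle when $\lambda>0$.
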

\medskip

\noindent{\bf Proof.\ }
Consider the spirals $\mathcal{S}^{+}_q$ and $\mathcal{S}^{-}_q$ through $q$, specifically the parts with $t \leq 0$ (see Figure~\ref{fig:Lemma1}). Any point on the opposite side of the spirals as $p$ is unable to reach $q$. Thus any shortest path from $p$ to $q$ cannot cross either of these spirals. If we see these spirals as obstacles and ignore the angle restriction for now, the shortest path $\pi$ is simply a straight segment followed by a spiral segment. Now consider any point $u$ on $\pi$. Because $\mathcal{S}^{+}_q$ and $\mathcal{S}^{+}_u$ cannot cross (same for $\mathcal{S}^{-}_q$ and $\mathcal{S}^{-}_u$), we get that $q \in \mathcal{R}_u$. Therefore $\pi$ also satisfies the angle restriction.\hfill\QED

\begin{figure}[b]
    \hfill
    \begin{minipage}[t]{.25\textwidth}
        \centering
        \includegraphics[height=1.2in]{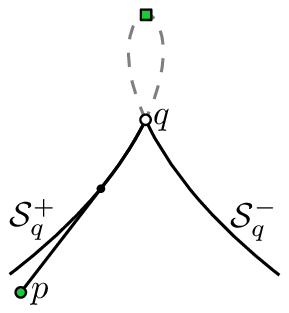}
        \small{\caption{Shortest path.\label{fig:Lemma1}}}
    \end{minipage}
    \hfill
    \begin{minipage}[t]{.45\textwidth}
        \centering
        \includegraphics[height=1.2in]{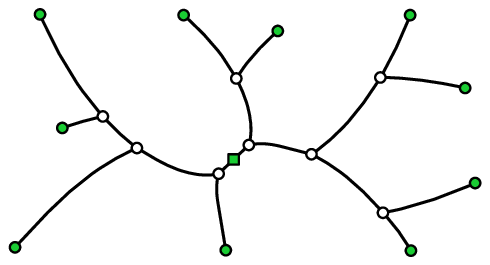}
        \small{\caption{An optimal flux tree ($\alpha = \pi/6$).\label{fig:OptFlowTree}}}
    \end{minipage}
    \hfill
    \begin{minipage}[t]{.25\textwidth}
        \centering
        \includegraphics[height=1.2in]{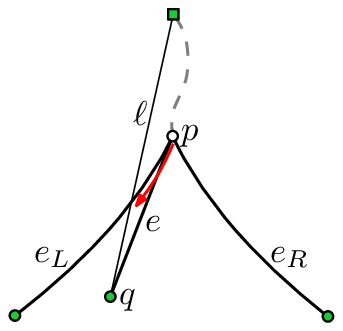}
        \small{\caption{Property 2.\label{fig:Property2}}}
    \end{minipage}
    \hfill\hfill
\end{figure}

\begin{property}
\label{property:optedgeshape}
An optimal flux tree consists of straight segments and spiral segments.
\end{property}
\medskip

\noindent{\bf Proof.\ }
Consider an optimal flux tree $T$. Now replace all edges between a terminal/the root and a Steiner node by the shortest angle-restricted path between the two points. This can only shorten $T$. By Lemma~\ref{lem:shortestpath}, the resulting flux tree consists of only straight segments and spiral segments. \hfill\QED

\begin{property}\label{property:optbinary}
Every node in an optimal flux tree $T$, other than the root $r$, has at most two incoming edges.
\end{property}
\medskip

\noindent{\bf Proof.\ }
Assume $T$ contains a node at $p$ with at least three incoming edges. Pick one of the incoming edges $e$ that is not leftmost or rightmost and let $q$ be the
other endpoint of $e$. Let $e_L$ and $e_R$ be the leftmost and rightmost incoming edges of $p$ (see Figure~\ref{fig:Property2}). Now consider the straight line $\ell$ from $q$ to the root $r$. Assume without loss of generality that $\ell$ passes $p$ on the left side (the right side is symmetric with $e_R$) or $\ell$ goes through $p$. We claim that we can locally improve the length of $T$ by moving the endpoint at $p$ of $e$ along $e_L$. The angle between $e$ and $e_L$ at $p$ is at most $\alpha$. Because $\alpha < \pi/2$ and because locally moving the endpoint of $e$ along $e_L$ will not make the spiral segment of $e$ longer, this will shorten the tree $T$. Also, locally moving the endpoint of $e$ along $e_L$ cannot suddenly violate the angle restriction (assuming that $\alpha > 0$). Contradiction.\hfill\QED

\begin{property}
\label{property:optplanar}
Every optimal flux tree is planar.
\end{property}
\begin{proof}
Assume two edges $e_1$ (from $p_1$ to $q_1$) and $e_2$ (from $p_2$ to $q_2$) cross. Let $u$ be the crossing between $e_1$ and $e_2$. Now simply remove the part of $e_1$ from $u$ to $q_1$. There is still a connection from $p_1$ to $r$ via $q_2$, so the resulting tree is still a proper flux tree. Also, removing a segment cannot violate the angle restriction and makes the tree shorter. Contradiction.
\end{proof}
The last property requires a more involved proof. We postpone the proof of this property until Section~\ref{sec:shallowlight}. Let $d^T(p)$ be the distance between $p$ and $r$ in a flux tree $T$ and let $d(p)$ be the Euclidean distance between $p$ and $r$.
\begin{property}
\label{property:shallowlight} The length of an optimal flux tree $T$ is at most $O((\sec(\alpha) + \csc(\alpha)) \log n)$ times the length of the minimum spanning tree on the same
set of terminals. Also, for every point $p \in T$, $d^T(p) \leq \sec(\alpha) d(p)$.
\end{property}

\section{Spiral trees}\label{sec:spiraltrees}

In this section we introduce \emph{spiral trees} and prove that they approximate flow trees. The arcs of a spiral tree consist only of spiral segments of a given $\alpha$ (see
Figure~\ref{fig:SpiralTree}). In other words, an optimal spiral tree is the shortest flow tree that uses only spiral segments. Spiral trees satisfy the angle restriction by
definition. Any particular arc of a spiral tree can consist of arbitrarily many spiral segments. That is, any arc of the spiral tree can switch between following its right spiral
and following its left spiral an arbitrary number of times. The length of a spiral segment can easily be expressed in polar coordinates. Let $p = (R_1, \phi_1)$
and $q = (R_2, \phi_2)$ be two points on a spiral, then the distance $D(p, q)$ between $p$ and $q$ on the spiral is
\begin{equation}
\label{eqn:spiraldist}
D(p, q) = \sec(\alpha) |R_1 - R_2| \, .
\end{equation}
Consider the shortest \emph{spiral path}---using only spiral segments---between a point~$p$ and a point $q$ reachable from $p$. The reachable region for $p$ is still its spiral region $\mathcal{R}_p$, so necessarily $q \in \mathcal{R}_p$. The length of a shortest spiral path is given by Equation~\ref{eqn:spiraldist}. The shortest spiral path is not unique, in particular, any sequence of spiral segments from $p$ to $q$ is shortest, as long as we move towards the root.

\begin{figure}[b]
    \hfill
    \begin{minipage}[t]{.45\textwidth}
        \centering
        \includegraphics[width=.8\textwidth]{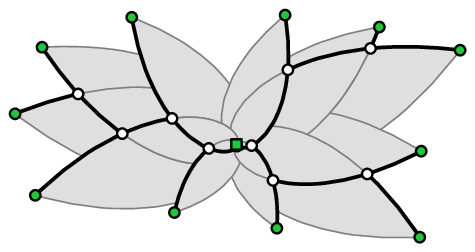}
        \caption{Spiral tree with spiral regions.}
        \label{fig:SpiralTree}
    \end{minipage}
    \hfill
    \begin{minipage}[t]{.5\textwidth}
        \centering
        \includegraphics{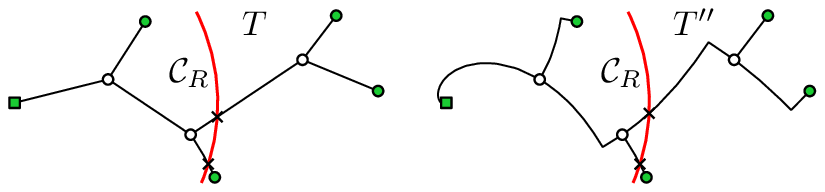}
        \caption{$T$ and $T''$.}
        \label{fig:Theorem1}
    \end{minipage}
    \hfill\hfill
\end{figure}

\begin{theorem}
\label{thm:fluxapprox}
The optimal spiral tree $T'$ is a $\sec(\alpha)$-approximation of the optimal flux tree $T$.
\end{theorem}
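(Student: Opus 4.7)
The natural plan is to build a spiral tree $T''$ directly from the optimal flux tree $T$ with the same node set and topology, and then argue $|T'| \le |T''| \le \sec(\alpha)\,|T|$. By Property~\ref{property:optedgeshape}, every arc of $T$ decomposes into straight segments and spiral segments. I would obtain $T''$ by keeping every spiral segment of $T$ untouched, and replacing every maximal straight segment with a shortest spiral path between its two endpoints. Since the original straight segment is an angle-restricted path, its head lies in the spiral region of its tail, so the replacement spiral path exists; and because we preserve endpoints, the resulting arcs still connect the same nodes as in $T$, so $T''$ is a valid spiral tree.

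The core estimate is a per-segment bound. For a straight segment of $T$ from $p$ with polar radius $R_1$ to $q$ with polar radius $R_2 \le R_1$, the replacement spiral path has length $\sec(\alpha)(R_1 - R_2)$ by Equation~\ref{eqn:spiraldist}. On the other hand, the Euclidean length of the original straight segment is at least $R_1 - R_2$ (the radial component is a lower bound on the total length). Hence the spiral replacement is at most $\sec(\alpha)$ times as long as the straight piece it replaces. Spiral segments are copied verbatim, so their lengths coincide in $T$ and $T''$. Summing over all pieces gives
\begin{equation}
|T''| \le \sec(\alpha)\,|T|.
\end{equation}

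Finally, because $T'$ is an optimal spiral tree and $T''$ is a spiral tree on the same terminals and root, $|T'| \le |T''| \le \sec(\alpha)\,|T|$, which is the claimed approximation factor.

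The only real subtlety is checking that the local replacement produces a legitimate spiral tree, i.e., that the endpoints of each straight segment satisfy the reachability condition needed to apply Equation~\ref{eqn:spiraldist}; this follows because the straight segment itself is an angle-restricted path in $T$, so its head lies in the spiral region of its tail. Planarity of $T''$ plays no role in the theorem as stated (the bound is on length only), so I would not worry about crossings introduced by the replacement. Everything else is a direct length comparison, and the $\sec(\alpha)$ factor falls out of Equation~\ref{eqn:spiraldist} with no further effort.
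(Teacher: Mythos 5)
Your proof is correct, and its overall strategy coincides with the paper's: both construct an intermediate spiral tree $T''$ on the nodes of the optimal flux tree $T$ and sandwich $|T'| \leq |T''| \leq \sec(\alpha)\,|T|$. The difference lies in how the middle inequality is established. You argue locally: invoke Property~\ref{property:optedgeshape} to decompose $T$ into straight and spiral pieces, replace each straight piece by a shortest spiral path of length $\sec(\alpha)$ times its radial drop, and bound the straight piece from below by that same radial drop via the reverse triangle inequality. The paper argues globally: it writes $L(T) \geq \int_0^\infty |T \cap \mathcal{C}_R|\,dR$ and $L(T'') = \sec(\alpha)\int_0^\infty |T'' \cap \mathcal{C}_R|\,dR$, and observes that spiralizing the arcs preserves the intersection count with every circle $\mathcal{C}_R$. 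The two computations encode the same fact (spiral length equals $\sec(\alpha)$ times radial descent, while any angle-restricted curve descends monotonically and has length at least its radial descent), so the bounds agree; indeed your $T''$ and the paper's have the same length. What each buys: your version is more elementary and self-contained per segment, but it leans on Property~\ref{property:optedgeshape}; the paper's integral identity needs no structural result about optimal flux trees and, more importantly, is reused almost verbatim in the $2$-approximation analysis (Lemma~\ref{lem:greedycircleisects} and the theorem following it), which is presumably why the authors set it up here. Your remark that planarity of $T''$ is irrelevant is also correct, since planarity of spiral trees is a derived property (Lemma~\ref{lem:spiralplandeg}), not part of the definition.
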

\begin{proof}
Let $\mathcal{C}_R$ be a circle of radius $R$ with the root $r$ as center. A lower bound for the length of $T$ is given by $L(T) \geq \int_0^\infty |T \cap \mathcal{C}_R| dR$, where $|T \cap \mathcal{C}_R|$ counts the number of intersections between the tree $T$ and the circle $\mathcal{C}_R$. Using Equation~\ref{eqn:spiraldist}, the length of $T'$ is $L(T') = \sec(\alpha) \int_0^\infty |T' \cap \mathcal{C}_R| dR$. Now consider the spiral tree $T''$ with the same nodes as $T$, but where all arcs between the nodes are replaced by a sequence of spiral segments (see Figure~\ref{fig:Theorem1}). For a given circle $\mathcal{C}_R$, this operation does not change the number of intersections of the tree with $\mathcal{C}_R$, i.e. $|T \cap \mathcal{C}_R| = |T'' \cap \mathcal{C}_R|$. So we get the following:
\[
L(T') \leq L(T'') = \sec(\alpha) \int_0^\infty |T'' \cap \mathcal{C}_R| dR = \sec(\alpha) \int_0^\infty |T \cap \mathcal{C}_R| dR \leq \sec(\alpha) L(T)
\]
\end{proof}
Next to the fact that optimal spiral trees are a good approximation of optimal flux trees, they also maintain important properties of optimal flux trees, namely Properties~\ref{property:optbinary} and~\ref{property:optplanar}.
%

\begin{lemma}
\label{lem:spiralplandeg}
An optimal spiral tree is planar and every node, other than the root, has at most two incoming edges. The root has exactly one incoming edge.
\end{lemma}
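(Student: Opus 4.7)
The plan is to extend the arguments for Properties~\ref{property:optbinary} and~\ref{property:optplanar} to the spiral setting, measuring edge lengths with Equation~\ref{eqn:spiraldist}, and then to add a separate ``merge near the root'' argument showing the root's in-degree is at most one.

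Planarity follows verbatim from the Property~\ref{property:optplanar} argument: if two arcs of an optimal spiral tree cross at a point $u$, delete the piece of one arc past $u$; since sub-arcs of spirals are spirals, the result is still a valid spiral tree and is strictly shorter, contradicting optimality.

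For the two-incoming-arcs bound at internal nodes, I would mimic the argument of Property~\ref{property:optbinary}. Suppose an internal node $p$ has at least three incoming arcs; pick a middle arc $e$ from $q$ and the leftmost arc $e_L$ from $q_L$, and assume WLOG that the line from $q$ to $r$ passes $p$ on the $e_L$ side. Move the endpoint of $e$ at $p$ to a nearby point $p'$ on $e_L$ with $R_p < R_{p'} < R_{q_L}$; this splits $e_L$ into the two spiral pieces $q_L \to p'$ and $p' \to p$, and replaces $e$ with a new spiral arc $q \to p'$. This new arc is legal: $p \in \mathcal{R}_q$ and $\mathcal{R}_q$ contains a neighborhood of $p$, hence $p' \in \mathcal{R}_q$ for $p'$ close enough to $p$. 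Summing the affected edge lengths via Equation~\ref{eqn:spiraldist}, the total change is $\sec(\alpha)(R_p - R_{p'}) < 0$, contradicting optimality.

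For the root having exactly one incoming arc, there is clearly at least one. Suppose there are two, from $q_1$ and $q_2$. The key geometric fact is that by stringing together a right and then a left spiral segment from any $q$, one can reach every point $(R, \phi^*)$ satisfying $R \leq R_q e^{-|\phi^* - \phi_q|\cot(\alpha)}$; in particular the entire disk of radius $R_q e^{-\pi\cot(\alpha)}$ around $r$ lies in $\mathcal{R}_q$. Hence $\mathcal{R}_{q_1} \cap \mathcal{R}_{q_2}$ contains some $p'$ with $|p'| > 0$. Replacing the arcs $q_1 \to r$ and $q_2 \to r$ by the three arcs $q_1 \to p'$, $q_2 \to p'$, $p' \to r$ saves $\sec(\alpha)|p'| > 0$ in length by Equation~\ref{eqn:spiraldist}, contradicting optimality. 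Iterating this trims any excess incoming arcs at $r$ down to exactly one.

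The main obstacle is ensuring that each locally modified arc remains a valid spiral arc. For the internal-degree argument this is handled by continuity of $\mathcal{R}_q$; for the root argument it is handled by the explicit disk-containment observation above. Once these are in place, the rest is straightforward bookkeeping in the spiral-distance coordinate $R$.
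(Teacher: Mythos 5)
Your planarity argument matches the paper's, and your root-degree argument is sound --- in fact the paper's written proof never explicitly argues the root case, so your observation that $\mathcal{R}_q$ contains the whole disk of radius $R_q e^{-\pi\cot(\alpha)}$ about $r$ (hence any two arcs into $r$ could be joined strictly earlier, saving $\sec(\alpha)|p'|$ by Equation~\ref{eqn:spiraldist}) is a genuine completion. The gap is in the internal-degree argument. Your feasibility claim that ``$\mathcal{R}_q$ contains a neighborhood of $p$'' fails exactly when the arc $e$ from $q$ to $p$ is a single spiral segment: then $p$ lies on $\mathcal{S}^{+}_q$ or $\mathcal{S}^{-}_q$, i.e.\ on the \emph{boundary} of $\mathcal{R}_q$, and the outward portion of the other spiral through $p$ lies outside $\mathcal{R}_q$. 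So if $e$ approaches $p$ along $\mathcal{S}^{+}_p$ while $e_L$ approaches along $\mathcal{S}^{-}_p$, the point $p'$ you slide to is not reachable from $q$ and the replacement arc $q \to p'$ does not exist. Your WLOG (based on which side of $p$ the line through $q$ and $r$ passes) is inherited from the flux-tree proof of Property~\ref{property:optbinary}, where it controls the length saving; here it does not control feasibility, so it can steer you to precisely the infeasible choice.

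The missing idea --- which is the paper's entire proof of the degree bound --- is that only two spirals pass through $p$, so each incoming arc must end in a segment of $\mathcal{S}^{+}_p$ or $\mathcal{S}^{-}_p$, and by pigeonhole two of the three arcs overlap on a common spiral segment $[p',p]$ with $R_{p'}>R_p$. Rerouting those two arcs to join at $p'$ and keeping a single segment $p'\to p$ is automatically feasible and saves $\sec(\alpha)(R_{p'}-R_p)>0$ by Equation~\ref{eqn:spiraldist}. This closes your gap (always slide $e$ along the arc it already coincides with near $p$) and shows that the leftmost/middle/rightmost framing is vacuous for spiral trees: three incoming spiral arcs can never be geometrically distinct in a neighborhood of $p$.
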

\begin{proof}
First of all, only two spirals go through a single point: the left and the right spiral. So every node other than the root $r$ has at most two incoming arcs, otherwise there is a repeated spiral segment which can be removed. Furthermore, the same arguments as in the proof of Property~\ref{property:optplanar} yield that also the optimal spiral tree is planar.
\end{proof}

When we approximate an optimal flux tree by a spiral tree, we can further reduce the length of the tree by replacing every arc by the shortest angle-restricted path between its endpoints.
This operation does not improve the approximation factor, but it improves the tree visually.

\subsection{Shallow-Light Property}\label{sec:shallowlight}
In the following we prove the shallow-light property for optimal spiral trees. That is, we bound the length of an optimal spiral tree in comparison with a minimum spanning tree on the same set of terminals and we give an upper bound on the length of a path between any point in a spiral tree and the root. Since for flux trees the length of such paths and the total length are not larger, we can conclude that optimal flux trees also have the shallow-light property (Property~\ref{property:shallowlight}).
%
The second part of the property (shallowness) is easy to see.
The path of a node to the root in any spiral tree is by Equation~\ref{eqn:spiraldist}
bounded by $\sec(\alpha)$ times the distance of the node to the root. 

We now show that the length of an optimal spiral tree approximates the length of a minimum spanning tree by a factor of $O((\sec(\alpha) + \csc (\alpha)) \log n )$. We build a spiral tree in the following way.
 First we find a short cycle through the points. We then take a matching based on this cycle and pairwise join points by spiral segments. This results in $\lceil n/2 \rceil$ components. On these we again find a matching and pairwise join them and so on. We need to ensure that the set of spiral segments used in the construction is compatible with a spiral tree.

 Throughout this section we will assume that the root of the tree is placed at the origin. We call a sequence of spiral segments between two points \emph{inward going} if the distance from the segments to the origin have no local maximum except possibly at the two points. In particular, if a point is in the spiral region of another, a path of decreasing distance to the root from the outer point to the inner one would be inward going. Any pair of points can be joined by an inward going sequence of two spiral segments and inward going sequences are compatible with spiral trees if we use the point with smallest distance to the root as join node.

We need to bound the length of such a sequence. For this we first bound the length of a spiral segment. Let $p$, $p'$ be two points on a spiral segment with polar coordinates $p = (R,\phi)$ and $p' = (R',\phi')$. Equation~\ref{eqn:spiraldist} gives us a bound in terms of $R, R'$. To bound the length in terms of $R,\phi,\phi'$, let us assume $R' \leq R$. The other case is analogous. We consider the parametric equations of the spiral through $p$ and $p'$ with $(R(0),\phi(0)) = (R, \phi)$. For $p'$ we obtain the equations $R' = R e^{-t}$ and $\phi' = \phi + \tan (\alpha) t$ (or $\phi' = \phi - \tan (\alpha) t$ depending on whether the points lie on a left or right spiral). Solving for $R'$ yields
$R' = R e^{- (|\phi'-\phi|)/\tan (\alpha)}$.
Inserting this into Equation~\ref{eqn:spiraldist} gives
\begin{equation}
\label{eqn:spiral2}
D(p,p') = \sec (\alpha) R (1-e^{- (|\phi'-\phi|)/\tan (\alpha)})\, .
\end{equation}
Equation~\ref{eqn:spiral2} has several consequences. Given two points $p$, $q$ that do not lie in the spiral regions of each other. Assume we have two sequences of spiral segments, each connecting $p$ and $q$ such that no ray through the origin intersect a sequence twice, and such that parameterized by the angle $\phi$ to the origin, one sequence has a smaller or equal distance to the origin for all $\phi$. Then sweeping over the angle and summing up the contributions of Equation~\ref{eqn:spiral2} gives that the sequence closer to the origin has a smaller (or equal) total arc length. Thus, the shortest connection between $p$ and $q$ is obtained by simply joining $p$ and $q$ by an inward going sequence of two spiral segments. 

Another consequence of Equation~\ref{eqn:spiral2} is the following. Again $p = (R, \phi)$, $q = (R', \phi')$ are two points that do not lie in the spiral regions of each other. Further assume $R, \phi, \phi'$ are given, but for $R'$ we only know $R' \leq R$. Then the arc length of the inward going sequence of two spiral segments joining $p$ and $q$ (using the angle range between $\phi$ and $\phi'$) is maximized for $R=R'$. This follows from the same argument as above, i.e., the resulting
sequence of spiral segments dominates all others in terms of distance to the origin.

So far we have not linked the arc length of the spiral segments between two points with the Euclidean distance between the points. We do this by the following lemma.
\begin{lemma}
Two points in the plane of distance $D$ can be connected by an inward going path of logarithmic spirals of angle $\alpha$ such that the summed length of the spiral segments is bounded by $3 D \max (\sec (\alpha), \csc (\alpha))$. The path uses at most two spiral segments.
\end{lemma}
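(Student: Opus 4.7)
The plan is to construct an inward going path made of at most two spiral segments and then bound its length via Equation~\ref{eqn:spiraldist}. Place the root at the origin and, after a rotation, a reflection, and (if necessary) swapping the two points, write $p=(R,0)$ and $q=(R',\theta)$ in polar coordinates with $R\geq R'>0$ and $0\leq\theta\leq\pi$. Recall that $q\in\mathcal{R}_p$ is equivalent to $\theta\leq \tan(\alpha)\ln(R/R')$. In either case the path will consist of a piece of the right spiral through $p$ joined to a piece of the left spiral through $q$ at a point $p^{*}$ with radius $R_{*}=\sqrt{RR'}\,e^{-\theta\cot(\alpha)/2}$; this meeting point is obtained by a routine intersection calculation using the parametric form of the two spirals.

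Case A ($q\in\mathcal{R}_p$). Here $R'\leq R_{*}\leq R$, so the radius along $p\to p^{*}\to q$ decreases monotonically and the path is inward going. By Equation~\ref{eqn:spiraldist} its total length equals $\sec(\alpha)(R-R')$, and the triangle inequality $|R-R'|\leq D$ gives a bound of $\sec(\alpha)D\leq 3D\max(\sec(\alpha),\csc(\alpha))$.

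Case B ($q\notin\mathcal{R}_p$). Now $R_{*}<R'$, so the path $p\to p^{*}\to q$ has a single local minimum of radius at $p^{*}$, which the definition of inward going still allows. Its total length is
\[
L \;=\; \sec(\alpha)\bigl[R+R'-2\sqrt{RR'}\,e^{-\theta\cot(\alpha)/2}\bigr].
\]

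The main work is bounding $L$ by $3D\max(\sec(\alpha),\csc(\alpha))$. I would first apply $e^{-x}\geq 1-x$ to obtain $L\leq \sec(\alpha)\bigl[(\sqrt{R}-\sqrt{R'})^{2}+\sqrt{RR'}\,\theta\cot(\alpha)\bigr]$. The first summand is at most $R-R'$ (since $\sqrt{R}+\sqrt{R'}\geq \sqrt{R}-\sqrt{R'}$) and hence at most $D$, while the second is handled via the law of cosines $D^{2}=(R-R')^{2}+4RR'\sin^{2}(\theta/2)$ combined with the estimate $\sin(\theta/2)\geq \theta/\pi$ on $[0,\pi]$, which gives $\sqrt{RR'}\,\theta\leq (\pi/2)D$. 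Putting these together yields $L\leq D\bigl(\sec(\alpha)+\tfrac{\pi}{2}\csc(\alpha)\bigr)$, and a short split on whether $\alpha\leq\pi/4$ or $\alpha>\pi/4$ shows the parenthetical is bounded by $3\max(\sec(\alpha),\csc(\alpha))$. I expect this final chain of estimates to be the main obstacle: the exponential factor $e^{-\theta\cot(\alpha)/2}$ cannot be dropped naively, and balancing the radial bound $|R-R'|\leq D$ with the angular bound $\sqrt{RR'}\,\theta\leq(\pi/2)D$ tightly enough to keep the constant at $3$ requires some care.
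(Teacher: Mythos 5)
Your proof is correct, and while it shares the paper's skeleton---split on whether one point lies in the other's spiral region, and in both cases join the points via the right spiral of one and the left spiral of the other---the quantitative argument in the hard case is genuinely different and, I think, cleaner. The paper handles the case $q\notin\mathcal{R}_p$ by a further subcase split ($R_2\leq 3D/2$, where it switches to a different path through the origin, versus $R_2>3D/2$), and in the latter subcase it invokes an extremal argument (``$L$ is maximized for $R_1=R_2$,'' justified by the informal discussion following Equation~\ref{eqn:spiral2}) followed by $1-e^{-x}\leq x$ and an $\arcsin$ estimate on the angular separation. You instead compute the join radius $R_*=\sqrt{RR'}\,e^{-\theta\cot(\alpha)/2}$ explicitly, which lets you decompose the length exactly into a radial part $(\sqrt{R}-\sqrt{R'})^2\leq R-R'\leq D$ and an angular part $\sqrt{RR'}\,\theta\cot(\alpha)$, the latter controlled by the law-of-cosines identity $D^2=(R-R')^2+4RR'\sin^2(\theta/2)$ together with Jordan's inequality $\sin(\theta/2)\geq\theta/\pi$. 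This avoids both the subcase split and the extremal argument, and yields the sharper constant $\sec(\alpha)+\tfrac{\pi}{2}\csc(\alpha)\leq(1+\tfrac{\pi}{2})\max(\sec\alpha,\csc\alpha)<3\max(\sec\alpha,\csc\alpha)$; note that your final ``split on $\alpha\lessgtr\pi/4$'' is unnecessary, since bounding each of $\sec\alpha$ and $\csc\alpha$ by their maximum already gives the factor $1+\pi/2\approx 2.57$. All the individual steps check out (in particular $R'\leq R_*\leq R$ is indeed equivalent to $q\in\mathcal{R}_p$, so the inward-going property holds in both cases), so this is a valid alternative proof.
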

\begin{proof}
Let $p_1$, $p_2$ be two points of distance $D$ with polar coordinates $p_1 = (R_1,\phi_1)$ and $p_2 = (R_2,\phi_2)$. Without loss of generality we assume that $R_1 \leq R_2$, $\phi_1 \leq \phi_2$ and $\phi_2 - \phi_1 \leq \pi$. We first handle the case that $p_1$ lies in the spiral region of $p_2$. In this case we can connect the points by an inward going path from $p_2$ to $p_1$ using two spiral segments. By Equation~\ref{eqn:spiraldist} the length of this path is $\sec(\alpha) (R_2-R_1) \leq \sec(\alpha) D$, which proves the claim for this case.

Next we handle the case that $p_1$ does not lie in the spiral region of $p_2$. In this case we join the points using the right spiral through $p_1$ and the left spiral through $p_2$. Let $p = (R, \phi)$ be the point where the two points first join. The summed length of the spiral segments is $L = \sec (\alpha) (R_1 + R_2 - 2R)$, which we need to bound in terms of $D$

We distinguish two cases. First assume the points have a distance of at most $3D/2$ to the root. Then we obtain the connection between the points by simply connecting both to the root. Then $L \leq \sec(\alpha) (R_1+R_2) \leq 2 \sec(\alpha) 3D/2 = 3 D \sec(\alpha)$.
Next assume $R_2>3D/2$.
From the discussion of Equation~\ref{eqn:spiral2} above we know that $L$ is maximized for $R_1=R_2$. In this case we have $\phi = (\phi_2+\phi_1)/2$ and therefore
\begin{align*}
L &= \sec (\alpha) (2 R_2 - 2 R_2 e^{- \frac{\phi_2-\phi_1}{2\tan (\alpha)}})  = \sec (\alpha) 2 R_2 (1 - e^{- \frac{\phi_2-\phi_1}{2\tan (\alpha)}})\\
 &\leq \sec (\alpha) 2 R_2 \frac{\phi_2-\phi_1}{2\tan (\alpha)} = \csc (\alpha) R_2 (\phi_2-\phi_1).
\end{align*}
%
%
It remains to bound $R_2 (\phi_2-\phi_1)$ in terms of the Euclidean distance $D$ of the two points. Observe that for given $p_2$ (with $R_2>3D/2$) and $D$ the angle $\phi_2-\phi_1$ is maximized if the line through the origin and $p_1$ is tangent to the circle of radius $D$ around $p_2$. Thus $\phi_2-\phi_1$ is maximized if the angle formed by $p_2$, $p_1$ and the origin is $\pi/2$. In this case $\phi_2-\phi_1 = \arcsin(D/R_2)$. Thus, in general
\[
\phi_2-\phi_1 \leq \arcsin \left( \frac{D}{R_2} \right) =  \arctan \left( \frac{D}{\sqrt{R_2^2 - D^2}} \right) \leq \frac{D}{\sqrt{R_2^2 - D^2}} \, .
\]
Since $R_2>3D/2$, we have $\sqrt{R_2^2 - D^2}\geq R_2 \sqrt{1-4/9}$. Plugging this into the above bound gives $\phi_2-\phi_1 \leq D/(R_2\sqrt{5/9})$. Now inserting this into the bound on $L$ gives
\[
L \leq \csc (\alpha) D/\sqrt{5/9} < 3 \csc (\alpha) D.
\]
Combining the cases results in the claimed bound.
\end{proof}

%

\begin{theorem}
The length of the optimal spiral tree of a set of points is bounded by \\$3 \lceil \log_2 n \rceil \max (\sec (\alpha), \csc (\alpha ))$ times the length of the minimum spanning tree of the set of points with the origin included.
\end{theorem}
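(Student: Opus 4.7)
The plan is to construct an explicit feasible spiral tree of the claimed length in $\lceil\log_2 n\rceil$ rounds of hierarchical matching, so that its length upper bounds the length of the optimum. Initialize $P_0=\{t_1,\ldots,t_n\}$. In round $i$ I compute a perfect matching $M_i$ on $P_i$ (pairing a leftover point with the root if $|P_i|$ is odd) and, for each matched pair $(p,q)$, join them by the inward-going two-segment spiral path supplied by the preceding lemma. The join point of those two spirals, which is strictly closer to the root than either endpoint, becomes the representative of the merged component in $P_{i+1}$. After $\lceil\log_2 n\rceil$ rounds one representative remains and is connected to $r$ by its spiral path. Because every join point lies inside the spiral regions of both its parents and only two spirals pass through any point, all newly added spiral segments at every round assemble into a valid spiral tree $T$ in which each node has at most two incoming and one outgoing arc.

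Given this construction, the preceding lemma immediately bounds the spiral length added in round $i$ by $3\max(\sec\alpha,\csc\alpha)\cdot L(M_i)$, where $L(M_i)$ is the total Euclidean matching length. The remaining work is to prove $L(M_i)\le L^*$ for every round, where $L^*$ is the length of the MST of $\{t_1,\ldots,t_n\}\cup\{r\}$. For this I would use the standard MST-to-tour-to-matching chain: take any spanning tree on $P_i\cup\{r\}$, double its edges, shortcut an Euler tour to obtain a Hamiltonian cycle of length at most twice that tree, and then take the lighter of the two alternating matchings extracted from the cycle; this yields a matching of length at most the length of the tree.

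The main obstacle is the ingredient that makes this chain yield $L^*$, namely the existence, for every $i$, of a spanning tree on $P_i\cup\{r\}$ of length at most $L^*$. I would prove this by an inductive contraction argument: replacing each matched pair by its join point can only shorten a spanning tree. The key observation is that every join point $j$ of a pair $(p,q)$ satisfies $\|j-r\|<\min(\|p-r\|,\|q-r\|)$ and lies on the angle-restricted shortest path from either $p$ or $q$ to $r$, so $j$ acts as a kind of Steiner centroid of $\{p,q,r\}$ for the purposes of spanning-tree contraction, letting us reroute the two incident tree edges of $p$ and $q$ through $j$ without increasing total length. Pushing this monotonicity through rounds $0,\ldots,i-1$ produces a spanning tree on $P_i\cup\{r\}$ of length at most $L^*$. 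Summing the per-round bound $3\max(\sec\alpha,\csc\alpha)\cdot L^*$ over $\lceil\log_2 n\rceil$ rounds gives exactly the stated bound on $L(T)$, which in turn upper bounds the length of the optimal spiral tree.
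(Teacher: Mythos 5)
Your overall architecture matches the paper's: $\lceil\log_2 n\rceil$ rounds of pairwise joining, with the cost of a single round charged to $3\max(\sec\alpha,\csc\alpha)$ times a Euclidean structure of length at most $L^*$, and the first round handled via the doubled-MST tour and the cheaper of the two alternating matchings. The gap is in how you pay for rounds $i\geq 2$. Your key claim---that contracting each matched pair $(p,q)$ to its join point $j$ ``can only shorten a spanning tree'' because $j$ is radially closer to $r$ and acts as a Steiner centroid of $\{p,q,r\}$---is false. Being closer to the root says nothing about Euclidean distances to third points: if $x$ is a terminal at distance $\varepsilon$ from $p$ but matched to a point at large angular separation, then $j_x$ sits deep inside the disk near the root, and the contracted edge $jj_x$ replacing the tree edge $px$ of length $\varepsilon$ can be longer by an amount comparable to $\|p-r\|$. (Also, $j$ does not lie on the shortest angle-restricted path from $p$ to $r$, which is the straight segment $pr$; it lies on a boundary spiral of $\mathcal{R}_p$.) The best generic repair, bounding the displacement $\|p-j\|$ by the spiral-path length and hence by $\kappa\|p-q\|$, inflates the spanning tree by a constant factor \emph{per round}, which compounds to a polynomial rather than logarithmic overhead; so the inductive invariant ``there is a spanning tree on $P_i\cup\{r\}$ of length at most $L^*$'' is not available.

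The paper sidesteps this by never returning to Euclidean geometry after round one. It keeps the join nodes threaded on a \emph{cycle of spiral segments} $C_i'$, and bounds the cost of round $i+1$ by comparing $C_{i+1}'$ to $C_i'$ directly in spiral arc length: by the consequence of Equation~\ref{eqn:spiral2}, of two spiral-segment sequences spanning the same angular range, the one that is pointwise closer to the origin (as a function of the angle $\phi$) is no longer, and the cycle through the join points is pointwise closer to the origin than its predecessor. Hence every $C_i'$ has length at most $2\kappa L^*$ and every round costs at most $\kappa L^*$, with no re-derivation of a Euclidean spanning tree on the join points. To fix your proof you would need to replace the contraction step with this radial-domination argument (or something equivalent); as written, the inductive step does not go through.
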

\begin{proof}
In the following we construct a spiral tree for which this bound holds. Let $L$ be the length of the minimum spanning tree on the points including the origin. Let $\kappa = 3 \max (\sec (\alpha), \csc (\alpha ))$. Let $C_1$ be a cycle through the points of length at most $2L$ (e.g., obtained by ordering the points based on a depth first search in the minimum spanning tree). We replace each edge of $C_1$ by an inward going sequence of at most two spiral segments. This results in a cycle $C'_1$ of sequences of spiral segments of length at most $2\kappa L$. By taking either every even or every odd sequence we join pairs (possibly leaving the root unmatched) of nodes by spiral segments of total length at most $\kappa L$. We repeat the construction on the join nodes (and possibly an unmatched point)) using the cycle $C_2$ induced by the order given by $C'_1$. Again we form a cycle of spiral segments through the vertices of $C_2$. If we parameterize corresponding sequences in the cycles $C'_1$ and $C'_2$ by the angle $\phi$ from the origin, the sequences in cycle $C'_2$ are closer to the origin than the corresponding sequences in $C'_1$ for all angles. Thus as consequence of Equation~\ref{eqn:spiral2} the length of $C'_2$ is bounded by the length of $C'_1$ and therefore by $2\kappa L$. Thus as in the previous step we can join pairs of join nodes using spiral segments of total length at most $\kappa L$. Next we construct $C'_3$ from $C'_2$ in the same way and iterate the construction. After at most $\lceil \log_2 n \rceil$ all nodes have been joined. The total length is then $\lceil \log_2 n \rceil \kappa L$ as claimed.
\end{proof}

\noindent
\subsection{Relation with rectilinear Steiner arborescences.} \label{sec:transformation}
Both rectilinear Steiner arborescences and spiral trees contain directed paths, from the root to the terminals or vice versa. Every edge of a rectilinear Steiner arborescence is restricted to point right or up, which is similar to the angle restriction of flux and spiral trees. In fact, there exists a transformation from rectilinear Steiner arborescences into spiral trees. Consider the following transformation from the coordinates
$(x, y)$ of a rectilinear Steiner arborescence to the polar coordinates $(R, \phi)$ of a spiral tree.
\begin{align}
\label{eqn:transformation}
R    =& e^{x + y} \\
\nonumber \phi =& (y - x) \tan(\alpha)
\end{align}
Assume we keep one of the coordinates $x$ or $y$ fixed. Using the spiral equation from Section~\ref{sec:props} we see that the result is a spiral. More specifically, keeping $x$ fixed results in left spirals and keeping $y$ fixed results in right spirals. So that means that the above transformation transforms horizontal and vertical lines to right and left spirals, respectively (see Figure~\ref{fig:RSATransform}). The transformation maps the root of the rectilinear Steiner arborescence to $(1, 0)$. Thus, to get a valid spiral tree, we still need to connect $(1, 0)$ to $r$.

\begin{lemma}
The transformation in Equation~\ref{eqn:transformation} transforms a rectilinear Stei\-ner arborescence into a spiral tree.
\end{lemma}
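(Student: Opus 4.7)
The plan is to verify that every edge of a rectilinear Steiner arborescence lands on a spiral of the correct handedness under the transformation, then argue that the global tree structure (topology, directedness, and angle restriction) is preserved up to a single connecting arc from $(1,0)$ to the root. Since any RSA is composed exclusively of horizontal and vertical unit-slope-zero or unit-slope-infinite segments, it suffices to check each type in turn and assemble.

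For the first step I would take a horizontal RSA edge $\{(x,y_0) : x \in [x_1,x_2]\}$ with $y_0$ fixed, pick a basepoint $(x_0,y_0)$ on it, and set $t = x_0 - x$. Direct substitution into (\ref{eqn:transformation}) gives
\begin{equation*}
R(t) = e^{x+y_0} = e^{x_0+y_0}\,e^{-t} = R_0\, e^{-t}, \qquad \phi(t) = (y_0 - x)\tan(\alpha) = \phi_0 + t\tan(\alpha),
\end{equation*}
which is exactly the parametric form of the right spiral through the image of the basepoint. The same calculation for a vertical edge with $x_0$ fixed, using the parameter $t = y_0 - y$, yields $R(t)=R_0 e^{-t}$ and $\phi(t) = \phi_0 - t\tan(\alpha)$, i.e.\ the left spiral through the image. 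Thus horizontal RSA edges map to right-spiral segments and vertical RSA edges to left-spiral segments, and the angle restriction is automatically satisfied along every resulting arc.

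Next I would check that the combinatorial tree structure and its orientation transport correctly. The composition $(x,y) \mapsto (x+y,\,y-x) \mapsto (e^{x+y},(y-x)\tan(\alpha))$ is a smooth injection from $\mathbb{R}^2$ onto $\mathbb{R}^2\setminus\{0\}$; in particular, on the bounded region spanned by any finite RSA it is a diffeomorphism onto its image, so incidences between edges at Steiner nodes and at terminals are preserved and no spurious intersections are introduced. Along every RSA edge, both $x$ and $y$ are nondecreasing away from the root, hence $R = e^{x+y}$ is strictly monotone; reversing the orientation (so arcs point toward $r$ as required for spiral trees) makes every image arc inward going, as needed.

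The only delicate point is the image of the RSA root $(0,0)$, which lands at $(R,\phi)=(1,0)$ rather than at $r$. This has already been flagged in the surrounding text, and the bridging arc is obtained by the canonical inward-going spiral from $(1,0)$ toward the origin whose total length is $\sec(\alpha)$ by Equation~(\ref{eqn:spiraldist}); any choice of left/right spiral sequence accumulating at $r$ works. Attaching this arc yields a rooted tree of spiral segments directed toward $r$ that contains the images of all terminals of the RSA, which is the definition of a spiral tree. I expect this final attachment step to be the only real obstacle, since verifying the spiral parametrization itself is pure substitution.
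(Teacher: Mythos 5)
Your proof matches the paper's argument (the paper proves this lemma only implicitly, via the paragraph preceding it): the substitution showing that horizontal edges land on right spirals and vertical edges on left spirals, the monotonicity of $R = e^{x+y}$ along edges, and the extra arc from $(1,0)$ to $r$ are exactly the intended ingredients. One side claim is wrong, though: the map is \emph{not} an injection into $\mathbb{R}^2\setminus\{0\}$, even restricted to the bounding box of a finite RSA, since $\phi$ ranges over an interval of width $(X+Y)\tan(\alpha)$, which can exceed $2\pi$; the paper itself points out immediately after the lemma that the transformation is only a surjection. This does not damage the lemma, because a spiral tree is a combinatorial arborescence of spiral arcs and is not required to be crossing-free, so you should simply drop the ``no spurious intersections'' argument rather than rely on it.
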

Unfortunately, the transformation has several shortcomings. First of all, the transformation is not a bijection, it is a surjection. That means we can invert the transformation,
but only if we restrict the domain in the rectilinear space. But most importantly, the metric does not carry over the transformation. That means that it is not necessarily true
that the minimum rectilinear Steiner arborescence transforms to the optimal spiral tree. Thus the relation between the concepts cannot be used directly and algorithms developed for rectilinear Steiner arborescences cannot be simply modified to compute spiral trees. However, the same basic ideas can often be used in both settings.

\begin{figure}[t]
  \centering
  \includegraphics[width=.6\textwidth]{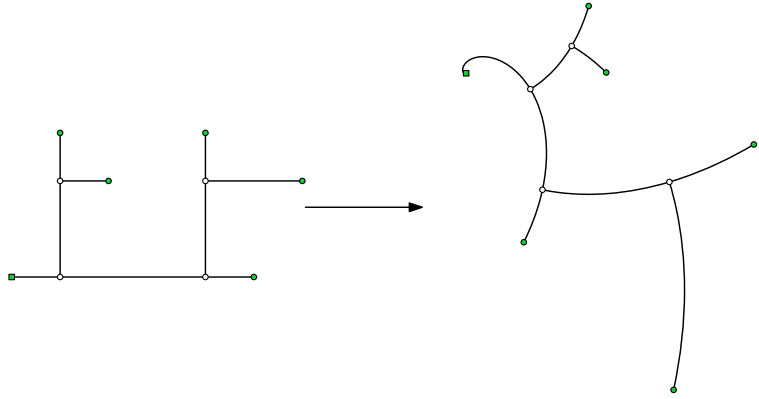}
  \caption{A rectilinear Steiner arborescence transformed to a spiral tree.}
  \label{fig:RSATransform}
\end{figure}

\section{Computing spiral trees} \label{sec:computespirals}

In this section we describe algorithms to compute (approximations of) optimal spiral trees. First we show that it is NP-hard to compute optimal flux or spiral trees. Then we give an exact algorithm for computing optimal spiral trees in the special case that all spiral regions are empty, i.e. $t_i \notin \mathcal{R}_{t_j}$ for all $i \neq j$. Finally we give an approximation algorithm for computing optimal spiral trees in the general case.

\subsection{Computing optimal flux and spiral trees is NP-hard}\label{sec:npproof}

For the hardness proofs we will choose $\alpha = \pi/4$. The reduction is from the rectilinear Steiner arborescence (RSA) problem~\cite{ss-rsap-05} for spiral trees, and
from the Euclidean Steiner arborescence (ESA) problem~\cite{Shi2000} for flux trees.
Shi and Su~\cite{ss-rsap-05} proved by a reduction from planar 3SAT that the decision versions of the RSA problem and the ESA problem are NP-hard. Their reduction uses points on an $O(m \times m)$ grid, where $m$ bounds the size of the 3SAT instance. We can assume the grid to be an integer grid. Now if there is a satisfying assignment, the optimal RSA and ESA have an integer length $K$, while if there is no such assignment the optimal RSA and ESA have length at least $K+1$.

We can therefore state the problems for which they proved NP-hardness and from which we will reduce as follows.
\\
\noindent {\bf Instance: } A set of integer points $P = \{ p_1,\ldots, p_N \}$ in the first quadrant of the plane with coordinates bounded by $O(N^2)$; a positive integer $K$.

\noindent {\bf Question (RSA): } Is there a RSA of total length $K$ or less? Otherwise the shortest RSA has length at least $K+1$.

\noindent {\bf Question (ESA): } Is there a ESA of total length $K$ or less? Otherwise the shortest RSA has length at least $K+1$.

\begin{wrapfigure}[10]{r}{.25\textwidth}
  \centering
  \vspace{.35\baselineskip}
  \includegraphics{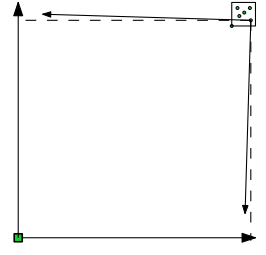}
  \vspace{-.35\baselineskip}
  \small{\caption{Reduction.\label{fig:NPhard-idea}}}
\end{wrapfigure}
The basic idea is sketched in Figure~\ref{fig:NPhard-idea}. Assume we are given an instance of the Euclidean Steiner arborescence problem with polynomially bounded coordinates. We translate the set of terminals by a large polynomial factor along the diagonal with slope 1 and place a new root at the origin. If the bound on the coordinates is small (the square in Figure~\ref{fig:NPhard-idea}) relative to the factor of the translation, then the angle formed by the line through any of the translated points and the origin with the x-axis is ``more or less $\pi/4$''. A $\pi/4$-restricted flux tree thus behaves within this square ``more or less'' like an Euclidean Steiner arborescence. For spiral trees we use the same setup but show that the distances on the spiral tree approximate the $L_1$-norm. However, quantifying ``more or less" precisely is technically rather involved and will be done in this section.

To draw the connection from Steiner arborescences to flux and spiral trees we generalize the concept of RSAs and ESAs. For flux and spiral trees the angle $\alpha$ is bounded relative to the root while for RSAs and ESAs the angle that an edge can make with the $x$-axis is bounded (or with any given line through the origin). For ESAs the angle of an edge is in $[0,\pi/2]$, while for RSAs the angle is in $\{0, \pi/2 \}$. We call a Steiner arborescence with angles in $[\beta,\pi/2-\beta]$ a $(\geq \beta)$-Steiner arborescence ($(\geq \beta)$-SA) and a Steiner arborescence with angles in $\{\beta,\pi/2-\beta\}$ a $\beta$-Steiner arborescence ($\beta$-SA). We do not restrict $\beta$ to be positive but to $-\pi/4 < \beta < \pi/4$. In the following Steiner arborescences are typically not rooted at the origin.

Now let $-\pi/4 < \beta < \beta' < \pi/4$. Every $(\geq \beta')$-SA is a $(\geq \beta)$-SA but the converse does not hold.
However, we can transform a $(\geq \beta)$-SA to $(\geq \beta')$-SA of similar length. Our transformation first transforms the whole tree and then connects the original points to their images under the first transformation. The transformation actually changes the location of the root. For the reduction we give this is not a problem because in the reduction we will have an additional root to which both the root of the original tree and the root of the transformed tree need to connect.

Let $\eta_1 = (-\cos \beta, -\sin \beta)$ and $\eta_2 = (-\sin \beta, -\cos \beta)$. Let $p_1, \ldots, p_n$ be points with $p_i = u_i \eta_1 + v_i \eta_2$, $(u_i,v_i) \in [0,B]^2$, where $B$ may depend on $n$. Let $T$ be a $(\geq \beta)$-SA on $p_1, \ldots, p_n$ with root $B \eta_1 + B \eta_2$.
Let $\lambda = \cos ( \beta + \beta')/ \cos (2 \beta')$ and $\eta'_1 = \lambda (-\cos \beta', -\sin \beta')$ and $\eta'_2 = \lambda (-\sin \beta', -\cos \beta')$.
We transform $T$ by the following transformation $\tau \colon \mathbb{R}^2 \rightarrow \mathbb{R}^2$: Any point $p = u \eta_1 + v \eta_2$ is mapped to $q = u \eta'_1 + v \eta'_2$. We obtain a Steiner arborescence $T'$ on $p_1, \ldots, p_n$ with root $B \eta'_1 + B \eta'_2$ by connecting $p_i$ to $\tau (p_i)$ by a line segment. Let $\lambda' = \sin (\beta' - \beta)/ \cos (\beta + \beta')$.
\begin{lemma}\label{lem:transtree}
$T'$ is a $(\geq \beta')$-SA and
$
|T'| \leq \lambda |T| + 2 \lambda' B n.
$
\end{lemma}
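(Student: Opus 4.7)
The plan is to prove the two claims of the lemma in turn. I would first establish that $T'$ is a $(\geq\beta')$-SA by checking the direction of every edge, and then bound $|T'|$ by separately controlling $|\tau(T)|$ and the total length of the connecting segments. The key structural fact is that $\tau$ is the linear map that acts as the identity on $(u,v)$-coordinates but reinterprets them in the basis $\{\eta'_1,\eta'_2\}$ instead of $\{\eta_1,\eta_2\}$.

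For validity, observe that every edge of $T$ from a child to its parent has $(u,v)$-displacement $(a,b)$ with $a,b\ge 0$, because the root $B\eta_1+B\eta_2$ has maximal $u$- and $v$-coordinates in the tree. Its $\tau$-image has displacement $a(-\eta'_1)+b(-\eta'_2)$, a non-negative combination of the two extreme valid directions $-\eta'_1$ and $-\eta'_2$ (which make angles $\beta'$ and $\pi/2-\beta'$ with the $x$-axis), so its angle lies in $[\beta',\pi/2-\beta']$ as required. For each connecting segment, a brief product-to-sum computation yields
\[
\eta_1-\eta'_1 = -\frac{\sin(\beta'-\beta)}{\cos(2\beta')}(\sin\beta',\cos\beta'), \qquad \eta_2-\eta'_2 = -\frac{\sin(\beta'-\beta)}{\cos(2\beta')}(\cos\beta',\sin\beta');
\]
since $(\sin\beta',\cos\beta')=-\eta'_2/\lambda$ and $(\cos\beta',\sin\beta')=-\eta'_1/\lambda$, the segment $p_i-\tau(p_i)=u_i(\eta_1-\eta'_1)+v_i(\eta_2-\eta'_2)$ is itself a non-negative combination of $-\eta'_1$ and $-\eta'_2$ (as $u_i,v_i\ge 0$), and therefore lies in the $(\geq\beta')$-cone.

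For the length I would write $|T'|=|\tau(T)|+\sum_{i=1}^n|p_i\tau(p_i)|$. An edge of $T$ with $(u,v)$-displacement $(a,b)$ has length $|e|=\sqrt{a^2+b^2+2ab\sin(2\beta)}$ and its image has length $\lambda\sqrt{a^2+b^2+2ab\sin(2\beta')}$; the trigonometric identity $\sin(2\beta')-\sin(2\beta)=2\lambda'\cos^2(\beta+\beta')$ combined with $\sqrt{x+\Delta}-\sqrt{x}\le \Delta/(2\sqrt{x})$ bounds the per-edge excess $|\tau(e)|-\lambda|e|$ by a multiple of $\min(a,b)$. The closed form above gives $|p_i-\tau(p_i)|=(\sin(\beta'-\beta)/\cos(2\beta'))\sqrt{u_i^2+v_i^2+2u_iv_i\sin(2\beta')}$, which is bounded in terms of $B$ via $u_i,v_i\le B$. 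The hard part is combining these two contributions: because $\tau$ stretches near-diagonal edges by more than $\lambda$, one cannot show $|\tau(T)|\le\lambda|T|$ in isolation. Both the per-edge excess and each connecting-segment length share the factor $\sin(\beta'-\beta)$ and scale with $B$, so the accounting must be carried out globally — exploiting that $T$ has only $O(n)$ edges, each confined to the $B\times B$ box in $(u,v)$-coordinates — to absorb the total excess together with the connecting-segment sum into the single budget $2\lambda'Bn$.
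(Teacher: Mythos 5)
Your validity argument is essentially the paper's (the paper does it with a law-of-sines triangle, you with an explicit closed form), and it is correct up to a sign: one computes $\eta_1-\eta'_1=+\frac{\sin(\beta'-\beta)}{\cos(2\beta')}(\sin\beta',\cos\beta')$, not the negative of that, and only with this sign is $\eta_1-\eta'_1$ a non-negative multiple of $-\eta'_2$ as your conclusion requires; with the formula as you wrote it the cone membership would fail. The same formula shows that the connection length per unit of $u_i,v_i$ is $\sin(\beta'-\beta)/\cos(2\beta')=\lambda\lambda'$, not $\lambda'$, so already the connecting segments can exceed the budget $2\lambda'Bn$ by a factor $\lambda$ (the paper's triangle computation divides by $\sin\gamma$ where the law of sines calls for $\sin\gamma'$).

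The genuine gap is exactly the step you flag as ``the hard part'' and leave undone, and you should be aware that it cannot be closed: the inequality as stated is false for general $\beta<\beta'$. You are right that $\tau$ is not a similarity and that $|\tau(e)|=\lambda\sqrt{a^2+b^2+2ab\sin 2\beta'}>\lambda|e|$ for edges strictly inside the cone; the paper's own proof silently assumes otherwise when it asserts that the transformed tree ``by construction \dots has length $\lambda|T|$''. But no global accounting absorbs the excess into $2\lambda'Bn$: take $\beta=-\pi/4+\epsilon$, $\beta'=\pi/4-\epsilon$, $n=1$, and $p_1$ at $(u_1,v_1)=(0,0)$. Then $T$ is a single diagonal segment with $|T|=2B\sin\epsilon$, the connecting segment has length $0$, and $|T'|=|\tau(T)|=B/\sin\epsilon$, whereas $\lambda|T|+2\lambda'Bn=B/\cos\epsilon+2B\cos(2\epsilon)\le 3B$; the claimed bound fails badly as $\epsilon\to 0$. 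What your $\min(a,b)$ estimate does deliver (since $\min(a,b)=O(|e|)$ once $\beta$ is bounded away from $\pm\pi/4$) is the weaker but sufficient conclusion $|T'|\le\lambda|T|+C\,\lambda\lambda'\,(|T|+Bn)$, and this is all that Theorem~\ref{thm:nphard-flux} needs, because there $\beta=-\beta_{\max}$, $\beta'=\beta_{\max}$ with $\beta_{\max}=c/n^{k-2}$, so $\lambda=1+o(1)$ and $\lambda'|T|=o(1)$. In short, your proposal is incomplete precisely where you say it is, but the missing step is not a routine omission on your part; it reflects a defect in the lemma statement and in the paper's proof, and the honest fix is to weaken the additive term rather than to find a cleverer accounting.
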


\noindent{\bf Proof.} If we ignore the connections between the $p_i$s and $\tau(p_i)$s the resulting transformed tree by construction fulfils the angle restriction and its length is $\lambda |T|$. We therefore only need to show that the connections fulfil the angle restriction and that the length of any connection is bounded by $\lambda'$. We have $p_i - \tau(p_i) = u_i (\eta_1-\eta'_1) + v_i (\eta_2-\eta'_2)$. It therefore suffices to prove that $\eta_1-\eta'_1$ and $\eta_2-\eta'_2$ fulfil the angle restriction. Since $\eta_2-\eta'_2$ is $\eta_1-\eta'_1$ mirrored at the diagonal with slope 1, it actually suffices to consider $\eta_1-\eta'_1$.

\begin{wrapfigure}[8]{r}{.25\textwidth}
  \centering
  \includegraphics{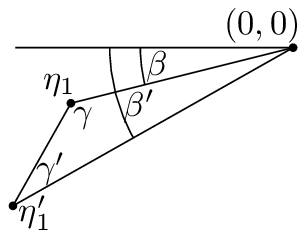}
  \small{\caption{Lemma~\ref{lem:transtree}.\label{fig:LemmaTransformation}}}
\end{wrapfigure}

Consider the triangle formed by the origin, $\eta_1$ and $\eta'_1$ (see Figure~\ref{fig:LemmaTransformation}). We have $|\eta_1| = 1$ and $|\eta'_1|=\lambda$. By the law of sines $\lambda = \sin \gamma / \sin \gamma'$. This equation holds for $\gamma = \pi/2 + \beta + \beta'$, since then $\gamma' = \pi - (\beta'-\beta) - \gamma = \pi/2 - 2 \beta'$ and therefore $\sin \gamma / \sin \gamma' = \sin (\pi/2 + \beta + \beta') / \sin (\pi/2 - 2 \beta') = \cos (\beta + \beta')/\cos (2 \beta') = \lambda$. On the other hand with $\gamma = \pi/2 + \beta + \beta'$ we have $\eta'_1$ is indeed reachable from $\eta_1$ in a  $(\geq \beta')$-SA. The length of the connection is here the length of the third side of the triangle, which is $\sin (\beta' - \beta)/\sin \gamma = \lambda'$. More generally the length of a connection is bounded by $2B$, that is $B$ for each coordinate. Since we have $n$ such connections the bound of the lemma holds. \hfill\QED
\medskip

\noindent
In Lemma~\ref{lem:transtree} we have two summands, one depending on $|T|$ and one on $n$. Since the terminals lie on an integer grid and since every terminal has to connect to the tree, the length of the tree is at least of order $n$.
\begin{obs}\label{obs:bound-on-n}
If the terminals of a Steiner arborescence T have integer coordinates then $n \leq 2|T|$.
\end{obs}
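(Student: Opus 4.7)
The plan is to use an Euler-tour doubling argument, essentially a TSP-style lower bound applied to the tree. I would first traverse $T$ from the root in depth-first order, using each edge twice; this produces a closed walk $W$ of length exactly $2|T|$ that passes through every terminal. Let $v_1, \ldots, v_n$ be the terminals listed by the time of their first visit in $W$, and view this list cyclically by closing $W$ up through the root.

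The crucial observation is that $W$ decomposes into $n$ successive sub-walks, the $i$-th one going from the first visit of $v_i$ to the first visit of $v_{i+1 \bmod n}$ (the last sub-walk wraps through the root back to $v_1$). Each such sub-walk is a piecewise-linear path in the plane, so by the triangle inequality its length is at least the Euclidean distance between its two endpoints. Summing over the $n$ pieces gives
\[
2|T| \;\geq\; \sum_{i=1}^{n} \|v_i - v_{i+1 \bmod n}\|.
\]
Since the $v_i$ are distinct points of the integer lattice, each summand is at least $1$, and for $n \geq 2$ this yields $2|T| \geq n$ directly. The degenerate case $n = 1$ is immediate in the setting of the reduction, where the root sits at the integer origin and the lone terminal is a distinct integer point, so already $|T| \geq 1$.

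There is no real obstacle in the argument; the only care needed is to set up the cyclic decomposition of the Euler tour around the root correctly so that the triangle inequality applies piece by piece, and to handle the trivial one-terminal case separately. Everything else is the standard TSP-doubling bound coupled with the fact that distinct lattice points cannot lie closer than unit distance.
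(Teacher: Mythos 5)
Your argument is correct, but it is not the one the paper has in mind. The paper dispenses with Observation~4 in a single sentence just before its statement: since every terminal must connect to the rest of the tree and distinct integer points are at pairwise distance at least $1$, the open balls of radius $1/2$ centered at the terminals are pairwise disjoint and each must contain at least $1/2$ units of tree length (the tree has to leave each such ball to reach anything else), whence $|T| \geq n/2$. You instead use the Euler-tour doubling bound: the depth-first closed walk of length $2|T|$ decomposes cyclically into $n$ sub-walks between consecutive first visits of terminals, each of length at least $1$ by the triangle inequality and the lattice hypothesis. Both proofs are sound and give the same constant; the packing argument is more local (no need to order the terminals or set up the cyclic decomposition) and is the natural reading of the paper's one-line justification, while your route actually proves the slightly stronger fact that $2|T|$ dominates the length of a Hamiltonian cycle through the terminals. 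You are also right to flag the $n=1$ edge case: as literally stated the observation can fail if the root is within distance $1/2$ of a lone terminal, and both your proof and the paper's implicitly rely on the reduction's setup (root at the origin, terminals translated far away) to rule this out.
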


\begin{theorem}\label{thm:nphard-flux}
It is NP-hard to compute the optimal flux tree of a point set.
\end{theorem}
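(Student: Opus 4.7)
The plan is to reduce from the Euclidean Steiner Arborescence (ESA) problem, whose decision version on integer points in a box of side $B = O(N^2)$ is NP-hard with a unit integer gap between YES and NO instances (the optimum is either at most $K$ or at least $K+1$). Given an ESA instance $(P = \{p_1,\ldots,p_N\}, K)$, construct a flux-tree instance with $\alpha = \pi/4$, root at the origin, and terminals $t_i = p_i + D \cdot (1,1)$ for a large polynomial $D = D(N)$ to be fixed. The geometric heart of the reduction is that every ray from a $t_i$ to the origin deviates from the direction $(-1,-1)/\sqrt{2}$ by at most some angle $\beta = O(B/D)$; consequently, in the rotated frame of Lemma~\ref{lem:transtree} (with axes $\eta_1, \eta_2$ tilted by $\beta$ from the coordinate axes), the flux-tree angle restriction near the $t_i$ is sandwiched between the constraints of a $(\geq \beta)$-SA and a $(\geq -\beta)$-SA.

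For the forward direction, take an optimal ESA on $P$ of length $K$ and apply Lemma~\ref{lem:transtree} with $\beta_{\mathrm{old}} = 0$ and $\beta_{\mathrm{new}} = \beta$ to obtain a $(\geq \beta)$-SA on the translated points $t_i$ of length at most $\lambda K + 2 \lambda' B N$, with a new root $r^{\ast}$ near $D(1,1) + B(1,1)$. Because a $(\geq \beta)$-SA in this rotated frame automatically respects the flux angle restriction centered at the origin, prepending a straight segment from the origin to $r^{\ast}$ of length $L_0 = \Theta(D)$ yields a valid flux tree of length at most $K' := \lambda K + 2 \lambda' B N + L_0$.

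For the backward direction, let $T_{\mathrm{flux}}$ be an optimal flux tree. I decompose its length into the length $L_0$ of a ``trunk'' connecting the origin to the terminal region and the length of an ``upper'' portion that lies in a neighborhood of the $t_i$ and therefore obeys the $(\geq -\beta)$-SA angle restriction in the rotated frame. Applying Lemma~\ref{lem:transtree} in the opposite direction (with $\beta_{\mathrm{old}} = -\beta$, $\beta_{\mathrm{new}} = 0$) to the upper portion yields an ESA on $P$ of length at most $\lambda_{-}(|T_{\mathrm{flux}}| - L_0) + 2 \lambda'_{-} B N$. Composing both conversions, the ESA $\to$ flux $\to$ ESA round trip inflates the length by a multiplicative factor $\lambda_- \lambda = 1 + O(\beta^2)$ plus an additive $O(\beta B N)$ term. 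Using Observation~\ref{obs:bound-on-n} to bound $N \leq 2K$ and the trivial bound $K = O(N B) = O(N^3)$, this combined error is $O(B^3 N / D^2 + B^2 N / D)$. Picking $D$ a sufficiently large polynomial in $N$ (e.g.\ $D = N^6$) makes the error strictly less than $1$, so the integer gap of the ESA instance is preserved: the optimal flux tree has length $\leq K'$ exactly when the optimal ESA has length $\leq K$.

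The main technical obstacle is justifying the ``trunk-plus-upper'' decomposition in the backward direction, since an optimal flux tree need not literally have a straight trunk followed by a local subtree. I plan to handle this by identifying the last point at which $T_{\mathrm{flux}}$ crosses a circle of radius roughly $D\sqrt{2} - \Theta(B)$ around the origin, and arguing that the part of the tree below that circle can be replaced, without increasing total length, by a single straight segment to the origin; the shallow-light bound $d^T(p) \leq \sec(\alpha) d(p)$ of Section~\ref{sec:shallowlight} gives the required leverage for this replacement. A secondary concern is merely keeping the coordinates of the flux-tree instance polynomially bounded, which is automatic once $D$ is chosen as a fixed polynomial in $N$. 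The analogous theorem for spiral trees will follow by reducing from RSA instead of ESA and replacing arc-length computations by Equation~\ref{eqn:spiraldist}.
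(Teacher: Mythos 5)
Your reduction is, in all essentials, the one the paper uses: reduce from ESA with its unit integer gap, translate the instance far out along the diagonal so that every direction to the root deviates from $\pi/4$ by at most $\beta_{\max}=O(B/D)$, observe that in this regime flux trees are sandwiched between $(\geq \beta_{\max})$-SAs and $(\geq -\beta_{\max})$-SAs (and ESAs likewise), invoke Lemma~\ref{lem:transtree} together with Observation~\ref{obs:bound-on-n} to bound the length difference between these two classes by $o(1)$, and choose the translation polynomial large enough to preserve the gap. Your error accounting ($O(B^3N/D^2+B^2N/D)$ with $D$ a fixed polynomial in $N$) matches the paper's computation with $k>8$.

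The one place where you genuinely diverge is the handling of the trunk, and that is also where your argument has a gap. The paper adds the \emph{translated ESA root} as an extra terminal of the flux instance and compares the optimal flux tree against the optimal ESA \emph{rooted at the origin on the translated points} (which is the original optimum plus one straight trunk edge); the sandwich between $(\geq\pm\beta_{\max})$-SAs, all rooted at the origin, then does the work, and only the small discrepancy between the roots of $T$ and $\tau(T)$ needs to be dismissed. You omit this extra terminal and instead propose to cut $T_{\mathrm{flux}}$ at a circle of radius $D\sqrt2-\Theta(B)$. Two problems follow. First, the tree may cross that circle at several points, so the portion above the cut is a forest whose roots are spread over an arc of length up to $\Theta(\beta_{\max}D)=\Theta(B)=\Theta(N^2)$; turning this forest into a single ESA rooted at the original root can cost that much, which annihilates the unit gap, and the accounting that would charge this cost against the extra length the lower portion spends merging those crossing points is not carried out. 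Second, the shallow-light bound $d^T(p)\le\sec(\alpha)\,d(p)$ is an \emph{upper} bound on path lengths with slack a constant factor $\sqrt2$, i.e., slack $\Theta(D)$ in absolute terms; it gives no leverage for the claim that the lower portion can be replaced by a straight trunk ``without increasing total length'' to within $o(1)$. Adding the translated root as a terminal, so that every candidate tree is forced through a single well-located point and the comparison can be made between origin-rooted arborescences throughout, is the clean repair and is exactly the device the paper uses. Your remarks on the spiral-tree analogue (reduce from RSA, compare spiral length to the $L_1$-distance) agree with the paper's Theorem~\ref{thm:nphard-spiral}.
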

\noindent{\bf Proof.}
Given an ESA instance with root $(0,0)$ and with the coordinates $x$ and $y$ of any point $(x,y)$ on the tree bounded by $c n^2$, we translate every terminal by $2 n^k (1,1)$ for a constant integer $k>2$ specified later. We include the translated root in the point set but not as root. Instead we take $(0,0)$ again as root 
The shortest ESA is simply the originally shortest translated with one additional edge from $(2 n^k,2 n^k)$ to $(0,0)$. Now, consider a point $(c' n^k + x, c' n^k + y)$ with $c'\geq 1$ and $0\leq x,y \leq c n^2$. The angle of a line through this point and the origin with the diagonal of slope~1 is bounded by $\beta_{\max} = c n^2/n^k = c/n^{k-2}$. Now, restricted to such points every $(\geq \beta_{\max})$-SA is a flux tree with $\alpha = \pi/4$, and every such flux tree a $(\geq -\beta_{\max})$-SA. Also, every $(\geq \beta_{\max})$-SA is a Euclidean Steiner arborescence, and every Euclidean Steiner arborescence a $(\geq  -\beta_{\max})$-SA. Thus, if we show that $(\geq \beta_{\max})$-SAs approximate $(\geq  -\beta_{\max})$-SAs well, this directly implies that flux trees approximate Euclidean Steiner arborescences well. More specifically, if we want to show that the length of the shortest flux tree approximates the shortest Euclidean Steiner arborescence up to a precision of $1$ (so that we can make the distinction between $K$ and $K+1$) then it is sufficient to prove that a $(\geq \beta_{\max})$-SA $T'$ can approximate a $(\geq  -\beta_{\max})$-SA $T$ up to this precision.

By Lemma~\ref{lem:transtree} and Observation~\ref{obs:bound-on-n} we get $|T'| \leq \lambda |T| + 2 \lambda' c n^2 n \leq |T| (\lambda + 4 c \lambda' n^2)$. Now, $\lambda = 1/ \cos (2 \beta_{\max}) < 1 / (1 - c/n^{k-2})$ and $\lambda' = \sin (2\beta_{\max}) < 2\beta_{\max} = c/n^{k-2}$. Thus,
\[
\lambda + 4 c \lambda' n^2 < 1 / (1 - c/n^{k-2}) + 8c^2 /n^{k-4} = 1 + o(1/n^4)
\]
for $k>8$. Since $|T| = O(n^4)$, this allows us to approximate the length up a o(1)-term. Note that we still need to connect the root of $|T|$ and the root of $|T'|$ to $(0,0)$. The length of this connection is slightly different because the roots of the trees are different, but the difference is negligible compared to the difference of $|T'|$ and $|T|$. \hfill\QED
\medskip

It remains to prove NP-hardness for spiral trees.
\begin{theorem}\label{thm:nphard-spiral}
It is NP-hard to compute the optimal spiral tree of a point set.
\end{theorem}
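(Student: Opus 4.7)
The plan is to mirror the proof of Theorem~\ref{thm:nphard-flux} but reduce from RSA rather than ESA, again fixing $\alpha = \pi/4$. Given an RSA instance with $N$ terminals on an integer grid with coordinates bounded by $cN^2$ and root at the origin, I would translate every terminal by $2N^k(1,1)$ for a constant $k$ to be chosen large, and place the spiral-tree root at the origin (keeping the translated image of the RSA root as a terminal, just as in the flux-tree reduction). The guiding intuition is that at a point of distance $R = \Theta(N^k)$ from the origin in a direction within angle $O(N^2/N^k)$ of the diagonal $(1,1)/\sqrt{2}$, the two spiral tangent directions form angles $\pm\pi/4$ with the radial direction and therefore deviate from horizontal and vertical by only $O(1/N^{k-2})$. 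Locally, a spiral tree on the translated instance should thus look like an RSA (in the original unrotated coordinates), and its length should approximate the $L_1$ length of the corresponding RSA.

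To make this quantitative I would prove a spiral analogue of Lemma~\ref{lem:transtree}. The main estimate is Equation~\ref{eqn:spiral2}: a spiral segment from $p = (R,\phi)$ to $q = (R',\phi')$ has length $\sec(\alpha)\,R\,(1 - e^{-|\phi'-\phi|/\tan\alpha})$. For $p, q$ lying inside a box of side $O(N^2)$ around the translated diagonal point, both $|R-R'|$ and $|\phi-\phi'|$ are controlled: $|R-R'| = O(N^2)$ and $|\phi-\phi'| = O(N^2/N^k)$. Expanding $1 - e^{-x} = x + O(x^2)$ in this range and comparing with the rectilinear length $|\Delta x| + |\Delta y|$ in the rotated $(1,1),(1,-1)$ basis, one obtains agreement up to a multiplicative factor $1 + O(1/N^{k-2})$. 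Chaining this across all edges of a candidate tree and handling the fixed connection from the translated root $(2N^k,2N^k)$ back to $(0,0)$ (whose length is the same for every candidate, up to the same negligible error), an optimal spiral tree on the translated instance has length within $(1 + o(1/N^4))$ of the optimal RSA length plus this fixed offset.

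To finish, I need the two-sided direction: a short RSA yields a short spiral tree of comparable length by simply realizing each rectilinear edge as an inward-going pair of spiral segments, and conversely a short spiral tree can be transformed back to a short RSA by perturbing each Steiner point to the nearest rectilinear grid line and rerouting, paying only an $O(N^2/N^{k-2})$ error per edge. Since $|T_{\mathrm{RSA}}| = O(N^4)$ by Observation~\ref{obs:bound-on-n}, choosing $k > 8$ makes the total approximation error strictly less than $1$, so the optimal spiral tree length distinguishes an RSA of length $K$ from one of length $\ge K+1$, yielding the NP-hardness reduction.

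The main obstacle will be the careful two-sided length comparison between a curved spiral network and a rectilinear one, since unlike in Lemma~\ref{lem:transtree} a spiral edge is not obtained from a rectilinear edge by a global affine map. I expect the cleanest route is to work locally at each terminal or Steiner point, where the transformation of Equation~\ref{eqn:transformation} linearizes (its Jacobian at a point of radius $R$ sends unit axis vectors to near-orthogonal vectors of length $R \sec(\alpha)$ up to a $1 + O(1/N^{k-2})$ factor), reducing the global comparison to summing local errors that remain negligible against the $O(N^4)$ tree length.
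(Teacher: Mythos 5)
Your reduction is the paper's: the same translation of an RSA instance by $2n^k(1,1)$ with $\alpha=\pi/4$, the same $o(1)$ error budget against the $O(n^4)$ tree length, and the same key step of showing that spiral-segment length agrees with the $L_1$-distance inside the translated box. The paper's version of that estimate is slightly more direct than your Equation~\ref{eqn:spiral2} expansion (and does not need the non-length-preserving transformation of Equation~\ref{eqn:transformation} or its Jacobian at all): it writes $D(p,q)=\sqrt{2}\,\bigl||p|-|q|\bigr|$, splits the radial drop over the two legs of the rectilinear path through $q'=(x_p,y_q)$, and bounds each leg as $\int\cos\gamma$ with $\gamma\in[\pi/4-\beta_{\max},\,\pi/4+\beta_{\max}]$, yielding $|D(p,q)-\|p-q\|_1|\leq 2\sqrt{2}c/n^{k-2}$.
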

\noindent{\bf Proof.}
We use the same construction as above for spiral trees but we start with a rectilinear Steiner tree instance instead of a Euclidean Steiner tree instance. To adapt the reduction it suffices to show that within the relevant part of the tree, that is the part in $[2n^k, 2n^k + c n^2]$, the length of a spiral segment between two points $p$, $q$ is up to a small error the same as $c''\|p-q\|_1$ for a suitable constant factor $c''$. The length of the spiral is $D(p,q) = \sec (\pi/4) ||p|-|q|| = \sqrt{2} ||p|-|q||$. Assume $x_q \leq x_p$ and $y_q \leq y_p$. Let $q' = (x_p, y_q)$. We have $| |p|-|q| | = (|p|-|q'|)+(|q'|-|q|)$. The difference $|p|-|q'|$ measures how much the distance to the origin decreases while moving from $p$ to $q'$. Let $y_0 = y_p-y_q$ be the length of the line segment between $p$ and $q'$ and let $\sigma \colon [0,y_0] \rightarrow \mathbb{R}^2$ be this line segment parameterized uniformly. For a point $u$ let $\gamma (u)$ be the angle formed by the line through the origin and $u$ with the $x$-axis. We have $|p|-|q'|=\int_0^{y_0} \cos \gamma (\sigma (u)) \mathrm{d}u$. Now $\pi/4 - \beta_{\max} \leq \gamma (\sigma (u)) \leq \pi/4 + \beta_{\max}$ and therefore $1/\sqrt{2} - \beta_{\max} \leq \cos \gamma (\sigma (u)) \leq 1/\sqrt{2} + \beta_{\max}$. Thus $||p|-|q'| - y_0/\sqrt{2}| \leq c/n^{k-2}$. By the same argument we have that $||q'|-|q| - x_0/\sqrt{2}| \leq c/n^{k-2}$, where $x_0 = x_p-x_q$. Therefore,
\begin{align*}
|D(p,q) - \|p-q\|_1| &= |\sqrt{2} ((|p|-|q'|)+(|q'|-|q|))- \|p-q\|_1| \\
                     &\leq |\sqrt{2} (x_0/\sqrt{2} + y_0/\sqrt{2} + 2c/n^{k-2}) - \|p-q\|_1|\\
                     &= 2\sqrt{2}c/n^{k-2}.
\end{align*}
Since the length of the RSA instance is in $O(n^4)$, the difference between measuring the length of a spiral segment versus taking the $L_1$-distance of endpoints of segments is in $o(1)$ for $k>4$. Thus, computing the optimal spiral tree is NP-hard.
\hfill\QED
\medskip

To prove NP-hardness it was sufficient to consider one value of $\alpha$, namely $\alpha = \pi/4$. Nonetheless, it is an interesting problem whether with results also holds for a given smaller $\alpha$. We believe that the NP-hardness proof in~\cite{ss-rsap-05} can be adapted to $\beta$-SAs and $(\geq \beta)$-SAs for $0<\beta \leq \pi/4$. With this we could also generalize our result to smaller $\alpha$.

\subsection{Optimal spiral trees with empty spiral regions} \label{sec:emptyregions}

Assume we are given an input instance such that $t_i \notin \mathcal{R}_{t_j}$ for all $i \neq j$. We give an exact polynomial time algorithm that computes optimal spiral trees for input instances with this property.

Before we discuss the algorithm, we first give a structural result on optimal spiral trees for these special instances. Assume all terminals are ordered radially (on angle) in counterclockwise direction around $r$ and are numbered as such. This means that the first terminal $t_1$ is arbitrary and the remaining terminals $t_2, \ldots, t_n$ follow this order. First note that, for these instances, every terminal is a leaf in any spiral tree. That is because no terminal can be reached by another terminal, so no terminal can have incoming edges. More important is the following result.

\begin{lemma}
\label{lem:DPleaforder}
If the spiral regions of all terminals are empty, then the leaf order of any planar spiral tree follows the radial order of the terminals.
\end{lemma}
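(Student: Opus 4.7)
The plan is to use a radial sweep with circles $\mathcal{C}_R$ centred at $r$, letting $R$ decrease from $\max_i R_i$ down to $0$. Since each spiral segment has $R(t) = R_0 e^{-t}$ with $R$ strictly decreasing, and since every arc of a spiral tree is oriented towards $r$ and is a concatenation of such segments, every arc is strictly monotone in its radial coordinate. In particular, each arc of the tree crosses $\mathcal{C}_R$ in at most one point, so the arcs active at radius $R$ acquire a well-defined cyclic angular order on $\mathcal{C}_R$.

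I would then track this cyclic order during the sweep. Two kinds of events can occur: a terminal event at $R = R_i$ inserts a new active arc at angle $\phi_i$ (no terminal can be an internal node, since spiral regions are empty), and a Steiner event at $R = R_v$ merges two active arcs at angle $\phi_v$. The crux is the claim that at a Steiner event the two merging arcs are necessarily cyclically adjacent in the angular order on $\mathcal{C}_{R_v + \varepsilon}$. I would prove this by a Jordan-curve argument: for sufficiently small $\varepsilon > 0$, the two merging arcs, together with the short circular arc between their crossings on $\mathcal{C}_{R_v + \varepsilon}$, bound a small topological disk $D$ incident to $v$. Any third active arc whose angular position lies between the two merging arcs must enter $D$ through $\mathcal{C}_{R_v + \varepsilon}$; since its $R$-coordinate is monotonically decreasing it cannot escape outward through $\mathcal{C}_{R_v + \varepsilon}$ again, and $r$ lies outside $D$, so that arc would be forced to cross one of the two merging arcs, contradicting planarity.

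Once adjacency at every merge is established, the result follows by induction on the number of active arcs: the merge structure is a binary tree on the $n$ terminals whose internal nodes combine only cyclically adjacent pairs, so the subtree below each internal node corresponds to a cyclic interval of terminals in angular order. Consequently, the DFS leaf order of the planar spiral tree coincides with the radial order of the $t_i$.

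I expect the Jordan-curve step to be the main technical obstacle: spiral arcs can wind many times around $r$ (even a single segment may sweep through an angle of $\pi$, and concatenations of segments can sweep through more), so one has to argue topologically rather than by a naive angular comparison. The strict monotonicity of $R$ along every arc is precisely what tames this winding and ensures that the small disk $D$ near $v$ is a well-behaved Jordan region.
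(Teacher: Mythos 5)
Your sweep framework is sound up to and including the local claim: every arc of a spiral tree is strictly radially monotone, so it meets each circle $\mathcal{C}_R$ at most once, and your Jordan-disk argument does show that the two arcs meeting at a Steiner node $v$ are cyclically adjacent \emph{among the arcs active at radius $R_v+\varepsilon$}. The genuine gap is in the final step, where you pass from this local adjacency to ``the subtree below each internal node corresponds to a cyclic interval of terminals in angular order.'' Two things are missing. First, a terminal $t_m$ with $R_m < R_v$ is not yet active when the merge at $v$ happens, so your adjacency claim says nothing about it; if $\phi_m$ lies radially between terminals of the two merging subtrees, the final leaf order skips $t_m$ even though every merge you performed was between adjacent \emph{active} arcs. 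Second, the cyclic order of the crossing points of the active arcs on $\mathcal{C}_R$ need not agree with the cyclic order, by angle, of the terminals they represent: an arc emanating from $t_i$ may wind through an angle as large as $\tan(\alpha)\ln(R_i/R)$ before reaching radius $R$, so ``cyclically adjacent active arcs'' does not translate into ``radially adjacent terminal intervals'' without further argument. Your invariant is therefore not shown to be maintained at terminal-insertion events, and the induction does not close.

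Symptomatically, your proof uses the hypothesis that all spiral regions are empty only to conclude that terminals are leaves, whereas that hypothesis must carry the main load. The paper argues directly on the final tree: if the leaf order jumps from $t_{i-1}$ to $t_{j+1}$, the tree path $\pi$ between them is contained in $\mathcal{R}_{t_{i-1}} \cup \mathcal{R}_{t_{j+1}}$ (each half is an angle-restricted path from its terminal), and since a skipped terminal $t_k$ lies in neither of these regions while $t_{i-1}, t_{j+1} \notin \mathcal{R}_{t_k}$, the path $\pi$ must cut across $\mathcal{R}_{t_k}$ and separate $t_k$ from $r$, forcing a crossing with $t_k$'s own path to the root. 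This separation argument is precisely what rules out the ``late terminal inside an already-closed gap'' configuration above (it forces the join point of two subtrees to lie at smaller radius than any terminal radially between them), so the cleanest repair is to replace your final induction by it --- at which point the sweep machinery becomes unnecessary.
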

\begin{proof}
Assume this is not case, so that the leaf order skips leafs $t_i, \ldots, t_j$, or in other words jumps from $t_{i-1}$ to $t_{j+1}$. Pick any terminal $t_k$ with $i \leq k \leq j$. Let $\pi$ be the path in the spiral tree from $t_{i-1}$ to $t_{j+1}$. Because the leaf order jumps from $t_{i-1}$ to $t_{j+1}$, no leaf is connected to the outside (as seen from $r$) of $\pi$. However, because $t_k \notin \mathcal{R}_{t_{i-1}}$, $t_k \notin \mathcal{R}_{t_{j+1}}$ and $t_{i-1}, t_{j+1} \notin \mathcal{R}_{t_{k}}$, the path $\pi$ cuts through $\mathcal{R}_{t_{k}}$ separating $t_k$ from $r$. So the only way for $t_k$ to be connected to $r$ is to cross $\pi$. Contradiction.
\end{proof}

\begin{cor}
If the spiral regions of all terminals are empty, then the leaf order of the optimal spiral tree follows the radial order of the terminals.
\end{cor}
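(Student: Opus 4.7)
The plan is to obtain this corollary as a direct consequence of two results already established in this section. Lemma~\ref{lem:DPleaforder} proves the much stronger statement that \emph{any} planar spiral tree has its leaf order following the radial order of the terminals, whenever all spiral regions $\mathcal{R}_{t_i}$ are empty. So all that is needed is to certify that the optimal spiral tree fits the hypothesis of that lemma, namely that it is planar.

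Concretely, I would proceed in two short steps. First, I invoke Lemma~\ref{lem:spiralplandeg}, which guarantees that an optimal spiral tree is planar (and in fact also has the useful structural property that every non-root node has at most two incoming arcs, though this is not required here). Second, I apply Lemma~\ref{lem:DPleaforder} to this planar spiral tree under the standing hypothesis that $t_i \notin \mathcal{R}_{t_j}$ for all $i \neq j$. The two applications compose immediately, and the radial ordering of the leaves follows.

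The only subtlety worth spelling out explicitly, to make the statement precise, is that under the empty-spiral-region hypothesis every terminal $t_i$ really is a leaf: no $t_j$ with $j \neq i$ lies in $\mathcal{R}_{t_i}$, so $t_i$ cannot have any incoming arcs from other terminals (and a terminal cannot be a Steiner point in the interior in any meaningful way here). This observation was already recorded in the paragraph preceding Lemma~\ref{lem:DPleaforder}, so the ``leaf order'' is a well-defined permutation of all $n$ terminals, which makes the comparison with the radial order meaningful.

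There is no genuine obstacle to overcome: the corollary is essentially an instantiation of Lemma~\ref{lem:DPleaforder} for the planar tree guaranteed by Lemma~\ref{lem:spiralplandeg}, and the proof should therefore be a single sentence citing both results.
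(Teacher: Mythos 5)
Your proposal is correct and matches the paper's (implicit) reasoning exactly: the corollary is stated without proof precisely because it is the composition of Lemma~\ref{lem:spiralplandeg} (the optimal spiral tree is planar) with Lemma~\ref{lem:DPleaforder}. Your additional remark that every terminal is a leaf under the empty-spiral-region hypothesis is the same observation the paper records just before Lemma~\ref{lem:DPleaforder}, so nothing is missing.
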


\begin{figure}[b]
  \centering
  \includegraphics{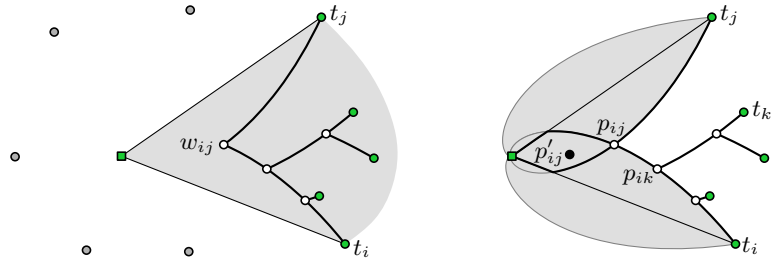}
  \caption{Left: The wedge $w_{ij}$ for terminals $t_i, \ldots, t_j$. Right: $p_{ij}$ is the optimal join point.}
  \label{fig:DPWedge}
\end{figure}

\noindent Using the above lemma we can use a simple dynamic programming algorithm to compute the optimal spiral tree. We simply solve all subproblems that ask for the optimal spiral subtree for a sequence of terminals $t_i, \ldots, t_j$. We require that this subtree is contained in the unbounded wedge $w_{ij}$ from the radial line through $t_i$ to the radial line through $t_j$ (see Figure~\ref{fig:DPWedge} left). Define $p_{ij}$ as the intersection of $\mathcal{S}^{+}_{t_i}$ and $\mathcal{S}^{-}_{t_j}$ ($p_{ii} = t_i$). As every internal node has exactly two incoming edges (Observation~\ref{lem:spiralplandeg}), we split the subtree into two subtrees at every internal node. To compute the optimal spiral tree for a sequence of terminals $t_i, \ldots, t_j$, we simply compute the optimal way to split the subtree into two subtrees by trying all possibilities. We then connect both subtrees to $p_{ij}$. Note that, by Lemma~\ref{lem:DPleaforder}, we need to check only $j - i$ ways to split this subtree. If $F(i, j)$ is the length of the optimal spiral subtree for the terminals $t_i, \ldots, t_j$ (contained in $w_{ij}$), then we can perform dynamic programming using the following recursive relation.

\begin{equation}
F(i, j) = \begin{cases} 0, & \mbox{if } i = j, \\ \min_k (F(i, k) + F(k+1, j) + D(p_{ik}, p_{ij}) + D(p_{(k+1)j}, p_{ij})), & \mbox{otherwise.} \end{cases}
\end{equation}

\noindent Note that we allow $j < i$, because we have a cyclical order. However, the value of $k$ in the above equation must be between $i$ and $j$ in the cyclical order. The distance function $D$ is defined as in Equation~\ref{eqn:spiraldist}.

\begin{lemma}
The function $F(i, j)$ describes the length of the optimal spiral subtree for the terminals $t_i, \ldots, t_j$ contained in $w_{ij}$.
\end{lemma}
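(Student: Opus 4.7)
The plan is induction on the number of terminals $\ell$ in the (cyclic) range from $i$ to $j$, simultaneously establishing both the value of $F(i,j)$ and the structural fact that the topmost (closest-to-$r$) node of any optimal subtree in $w_{ij}$ sits exactly at $p_{ij}$.

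Base case ($\ell = 1$, i.e.\ $i = j$): the optimal subtree is the single point $t_i$, of length $0 = F(i,i)$.

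Inductive step: Let $T$ be an optimum spiral subtree for $t_i, \ldots, t_j$ contained in $w_{ij}$ and let $v^*$ be its topmost node. By Lemma~\ref{lem:spiralplandeg} and the Corollary to Lemma~\ref{lem:DPleaforder}, $T$ is planar and its leaves appear along it in radial order. The degree bound at $v^*$ gives it one or two incoming edges; a single incoming edge can be ``contracted'' by sliding $v^*$ outward along it without increasing length, so we may assume two. These two edges lie on the two distinct spirals through $v^*$, namely $\mathcal{S}^+_{v^*}$ and $\mathcal{S}^-_{v^*}$. Planarity combined with the radial ordering of the leaves forces the edge toward the subtree containing $t_i$ (the clockwise-most terminal) to exit $v^*$ along the clockwise-outward branch, i.e.\ along the outward arc of $\mathcal{S}^+_{v^*}$; symmetrically, the edge toward the subtree containing $t_j$ exits along $\mathcal{S}^-_{v^*}$. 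Iterating this observation up the path from $v^*$ to $t_i$ (each spiral segment lies on the right spiral of its inner endpoint, and right spirals do not cross), one concludes $v^* \in \mathcal{S}^+_{t_i}$; analogously $v^* \in \mathcal{S}^-_{t_j}$. Hence $v^*$ belongs to the discrete set $\mathcal{S}^+_{t_i} \cap \mathcal{S}^-_{t_j} \cap w_{ij}$, whose outermost point is $p_{ij}$. Placing $v^*$ at any more inward intersection would lengthen each edge on the paths from $v^*$ to $t_i$ and to $t_j$ by Equation~\ref{eqn:spiraldist}, contradicting optimality, so $v^* = p_{ij}$.

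Split $T$ at $v^* = p_{ij}$ into subtrees $T_1$ (containing $t_i, \ldots, t_k$) and $T_2$ (containing $t_{k+1}, \ldots, t_j$) for some split index $k$; contiguity of the two leaf sets is guaranteed by the Corollary to Lemma~\ref{lem:DPleaforder}. Planarity, together with $T_1$'s topmost node lying on $\mathcal{S}^+_{p_{ij}} \subset \mathcal{S}^+_{t_i}$ and $T_2$'s on $\mathcal{S}^-_{t_j}$, confines $T_1 \subseteq w_{ik}$ and $T_2 \subseteq w_{(k+1)j}$. The inductive hypothesis then yields $|T_1| \geq F(i,k)$ and $|T_2| \geq F(k+1,j)$, with topmost nodes $p_{ik}$ and $p_{(k+1)j}$ respectively. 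Since $p_{ik}, p_{ij} \in \mathcal{S}^+_{t_i}$ and $p_{(k+1)j}, p_{ij} \in \mathcal{S}^-_{t_j}$, the two arcs incident to $v^*$ are valid spiral segments, with lengths $D(p_{ik}, p_{ij})$ and $D(p_{(k+1)j}, p_{ij})$ by Equation~\ref{eqn:spiraldist}. Thus $|T| \geq F(i,k) + F(k+1,j) + D(p_{ik}, p_{ij}) + D(p_{(k+1)j}, p_{ij})$, and minimizing over admissible $k$ yields $|T| \geq F(i,j)$. Conversely, plugging in the optimal subtrees furnished by induction realizes this bound, proving equality.

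The main obstacle I expect is justifying the structural identification $v^* = p_{ij}$ rigorously: pinning down which spiral through $v^*$ each incoming edge must follow using the planar leaf order, and then tracing this assignment up successive Steiner nodes to place $v^*$ on both $\mathcal{S}^+_{t_i}$ and $\mathcal{S}^-_{t_j}$. Once this is in hand, the wedge containment of $T_1$ and $T_2$ and the resulting DP decomposition follow by standard planarity arguments.
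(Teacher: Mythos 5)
Your overall scaffolding (induction on the number of terminals, identify the topmost node $v^*$, split into two contiguous ranges via Lemma~\ref{lem:DPleaforder}, recurse) matches the paper's, and you correctly flag the identification $v^* = p_{ij}$ as the crux. But the route you propose for that step has a genuine gap. You argue that at each Steiner node the clockwise incoming edge lies on the right spiral through that node, and then ``iterate up the path from $v^*$ to $t_i$'' to conclude $v^* \in \mathcal{S}^{+}_{t_i}$. That iteration does not go through: an arc of a spiral tree between two consecutive Steiner nodes may consist of arbitrarily many spiral segments switching between left and right spirals, and by Equation~\ref{eqn:spiraldist} every inward-going arc between the same two endpoints has the same length $\sec(\alpha)\,|R_1-R_2|$, so optimality does not pin down the shape of the arc. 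Writing $v^*=w_0, w_1, \ldots$ for the Steiner nodes on the path toward $t_i$, only the \emph{last} segment of the arc from $w_{\ell+1}$ into $w_\ell$ must lie on $\mathcal{S}^{+}_{w_\ell}$; the node $w_{\ell+1}$ itself need not lie on that spiral. Worse, the statement you would need --- that all the $w_\ell$ lie on $\mathcal{S}^{+}_{t_i}$, i.e.\ that they are the points $p_{ik}$ --- is essentially the conclusion of the lemma, so deriving $v^*=p_{ij}$ from it is circular.

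The step should instead be closed with a weaker reachability statement plus an exchange argument, which is what the paper does: any join point of the two subtrees must lie in $\mathcal{R}_{t_i} \cap \mathcal{R}_{t_j} \cap w_{ij}$, and hence in $\mathcal{R}_{p_{ij}}$, because $p_{ij}$ is the outermost point of that intersection. If the join point is some $p'_{ij} \neq p_{ij}$, replace the arcs $p_{ik} \rightarrow p'_{ij}$ and $p_{(k+1)j} \rightarrow p'_{ij}$ by $p_{ik} \rightarrow p_{ij}$, $p_{(k+1)j} \rightarrow p_{ij}$ and $p_{ij} \rightarrow p'_{ij}$; since $p'_{ij}$ is strictly closer to $r$ than $p_{ij}$, Equation~\ref{eqn:spiraldist} shows this strictly shortens the tree. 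This argument controls only radii, not arc shapes, which is why it survives the non-uniqueness of shortest spiral paths. The remaining ingredients of your proof (contiguity of the split, the recursion, and realizability of the bound) are fine, though note that the confinement of $T_1$ to $w_{ik}$ is asserted rather than argued in detail in both your write-up and the paper's.
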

\begin{proof}
We prove the lemma by induction. If $i = j$, then $F(i, j) = 0$ is clearly correct. If $i \neq j$, then, by Lemma~\ref{lem:DPleaforder}, we compute the minimum of all possible splits for the corresponding subtree. Let this split be between $t_k$ and $t_{k+1}$. By induction, $F(i, k)$ and $F(k+1, j)$ describe the lengths of the optimal subtrees. We need to show that $p_{ij}$ is the optimal point to join the subtrees. For the sake of contradiction, assume the optimal join point is $p'_{ij}$. This point must be in the intersection of $\mathcal{R}_{t_i}$, $\mathcal{R}_{t_j}$ and $w_{ij}$ (see Figure~\ref{fig:DPWedge} right). This means that $p'_{ij} \in \mathcal{R}_{p_{ij}}$. We can replace the edges $p_{ik} \rightarrow p'_{ij}$ and $p_{(k+1)j} \rightarrow p'_{ij}$ by the edges $p_{ik} \rightarrow p_{ij}$, $p_{(k+1)j} \rightarrow p_{ij}$, and $p_{ij} \rightarrow p'_{ij}$. Since $p'_{ij}$ must be closer to $r$ than $p_{ij}$, it follows from the definition of $D$ in Equation~\ref{eqn:spiraldist} that this operation shortens the tree. Contradiction.
\end{proof}
The length of the optimal spiral tree is not necessarily given by $F(1, n)$, but it can also be any of the lengths $F(i, i-1)$ for $2 \leq i \leq n$, so we need to compute the minimum of all these values. Note that there must be at least one wedge $w_{i(i-1)}$ that contains the entire optimal spiral tree, so this will give the length of the optimal spiral tree. Using additional information we can also compute the optimal spiral tree itself in this way. From the definition of $F(i, j)$, it is clear that the algorithm runs in $O(n^3)$ time.

\subsection{Approximation algorithm} \label{sec:approximation}

As shown in Section~\ref{sec:npproof}, computing the optimal spiral tree is NP-hard in general. In this section we describe a simple algorithm that computes a $2$-approximation of the optimal spiral tree. Note that, using Theorem~\ref{thm:fluxapprox}, this algorithm also directly computes a $(2 \sec(\alpha))$-approximation of the optimal flux tree.

For rectilinear Steiner arborescences, Rao \etal~\cite{Rao92} describe a simple $2$-approximation algorithm. The transformation mentioned in Section~\ref{sec:transformation} does not preserve length, so we cannot use this algorithm for spiral trees. However, below we show how to use the same global approach---sweep over the terminals from the outside in---to compute a $2$-approximation for optimal spiral trees in $O(n \log n)$ time.

The basic idea is to iteratively join two nodes, possibly using a Steiner node, until all terminals are connected in a single tree $T$, the \emph{greedy spiral tree}. Initially,
$T$ is a forest. We say that a node (or terminal) is \emph{active} if it does not have a parent in $T$. In every step, we join the two active nodes for which the \emph{join point}
is farthest from $r$. The join point $p_{uv}$ of two nodes $u$ and $v$ is the farthest point $p$ from $r$ such that $p \in \mathcal{R}_u \cap \mathcal{R}_v$. This point is unique if $u$, $v$ and $r$ are not collinear.

\clearpage 

\begin{lemma}
\label{lem:greedyplanar} The greedy spiral tree is planar.
\end{lemma}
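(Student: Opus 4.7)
I argue by contradiction: suppose two edges $e_1, e_2$ of the greedy spiral tree cross at an interior point $x$, introduced at steps $i \le j$ respectively. An easy induction shows that the radii $d(p_k)$ of the successive join points are non-increasing in $k$, since any candidate pair at step $k$ is either already available at step $k-1$, or involves the new join point $p_{k-1}$ whose reachable region $\mathcal{R}_{p_{k-1}}$ lies inside the disk of radius $d(p_{k-1})$ around $r$. Hence $d(p_i) \ge d(p_j)$, and since $x$ lies on both edges, $d(x) \ge d(p_i)$.

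If $i = j$, the two new edges lie on the opposite-handed bounding spirals of the two joined nodes, and by the definition of $p_j$ as the farthest point of $\mathcal{R}_u \cap \mathcal{R}_v$, these spirals meet on their inward parts only at $p_j$; so the two new edges meet only there. Thus we may assume $i < j$. Since same-handed spirals never cross, WLOG $e_1 \subseteq \mathcal{S}^-_{u_1}$ and $e_2 \subseteq \mathcal{S}^+_{u_2}$, with outer endpoints $u_1, u_2$. The point $x$ lies on the inward part of each of these two spirals, so by the parametric formulas it is the unique inward intersection of $\mathcal{S}^-_{u_1}$ and $\mathcal{S}^+_{u_2}$---that is, the join point $p_{u_1,u_2}$ itself, with radius $d(x)$.

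The finishing move is to exhibit the pair realising $x$ as a candidate at step $i$: then greedy maximality forces $d(x) \le d(p_i)$, combined with $d(x) \ge d(p_i)$ this forces equality, so $x = p_i$ and the edges meet only at the node $p_i$, not in their interiors. Whenever $u_2$ is already active at step $i$---in particular if $u_2$ is a terminal, or a join point created no later than step $i$---the pair $(u_1, u_2)$ does the job directly. The main obstacle is the remaining case where $u_2$ is a join point $p_k$ with $i < k < j$, so $u_2$ does not yet exist at step $i$.

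To handle this case I would introduce the \emph{left-descendant chain} of $u_2$: the sequence $u_2, a, a', \ldots, t$ obtained by iteratively following the smaller-angle child of the current node in the tree down to a terminal. At each link of the chain the left child connects to its parent by a segment of the parent's right spiral, so inductively every node of the chain lies on the single spiral curve $\mathcal{S}^+_{u_2}$. Let $\beta$ be the current active representative of this chain at step $i$; then $\beta \in \mathcal{S}^+_{u_2}$ with $d(\beta) \ge d(u_2) \ge d(x)$. Applying the join-point formula to $(u_1, \beta)$ and using the identity $\mathcal{S}^+_\beta = \mathcal{S}^+_{u_2}$ again yields $p_{u_1, \beta} = x$ with radius $d(x)$, once more contradicting the greedy choice $d(p_i) \ge d(x)$ at step $i$.
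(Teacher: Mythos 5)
Your argument is correct in substance, but it takes a noticeably longer route than the paper's. The paper's proof works directly with the sweep: a crossing point $x$ of edges $u_1v_1$ and $u_2v_2$ satisfies $d(v_1),d(v_2) \leq d(x) \leq d(u_1),d(u_2)$, so at the moment the sweep circle has radius $d(x)$ both $u_1$ and $u_2$ already exist and are still active; since $x \in \mathcal{R}_{u_1} \cap \mathcal{R}_{u_2}$, their join point lies at radius at least $d(x)$ and would already have been processed, so the two nodes cannot both still be unjoined when the circle reaches $x$. By indexing by discrete greedy steps rather than by sweep radius you create the artificial difficulty that $u_2$ may not yet exist at step $i$, which you then repair with the left-descendant chain. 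That chain lemma (all nodes obtained by repeatedly taking the right-spiral child lie on a single spiral curve, so an active representative on $\mathcal{S}^{+}_{u_2}$ exists at step $i$) is correct and is a nice structural observation, but it is not actually needed: your own monotonicity of join radii already kills that case, since $i < k < j$ gives $d(u_2) = d(p_k) \leq d(p_i) \leq d(x) \leq d(u_2)$, forcing $x = u_2 = p_i$ and hence no interior crossing. Two minor inaccuracies, neither fatal: opposite-handed spirals through distinct points intersect infinitely often as they wind toward the origin, so $x$ is not ``the unique inward intersection''---but all you need is $x \in \mathcal{R}_{u_1} \cap \mathcal{R}_{\beta}$, which gives $d(p_{u_1\beta}) \geq d(x)$ and the same contradiction; and edges created by the absorption rule at a terminal event ($t \in \mathcal{R}_v$) need not be single spiral segments, a degeneracy that both your proof and the paper's gloss over but that the same reachability argument handles.
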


\begin{wrapfigure}[6]{r}{.2\textwidth}
  \centering
  \includegraphics{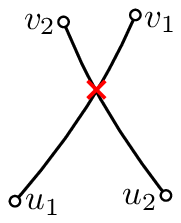}
\end{wrapfigure}
\noindent{\bf Proof.} Assume there is a crossing in the greedy spiral tree between two spiral segments, one between $u_1$ and its parent $v_1$, and another between $u_2$ and its
parent $v_2$. Note that the intersection must be farther from $r$ than both $v_1$ and $v_2$. But that means that the intersection must have been
encountered while both $u_1$ and $u_2$ were in $\mathcal{W}$, so this intersection should be a node in the greedy spiral tree. Contradiction.\hfill\QED
\medskip

\begin{wrapfigure}[9]{r}{.45\textwidth}
  \centering
  \includegraphics[width=.4\textwidth]{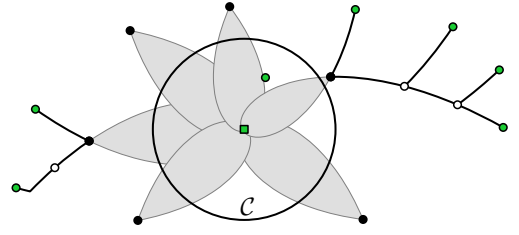}
  \small{\caption{The wavefront $\mathcal{W}$.\label{fig:Wavefront}}}
\end{wrapfigure}
The algorithm sweeps a circle $\mathcal{C}$, centered at $r$, inwards over all terminals. All active nodes that lie outside of $\mathcal{C}$ form the \emph{wavefront}
$\mathcal{W}$ (the black nodes in Figure~\ref{fig:Wavefront}). $\mathcal{W}$ is implemented as a balanced binary search tree, where nodes are sorted according to the radial order
around $r$. We join two active nodes $u$ and $v$ as soon as $\mathcal{C}$ passes over $p_{uv}$. For any two nodes $u, v \in \mathcal{W}$ it holds that $u \notin \mathcal{R}_v$.
By Lemma~\ref{lem:greedyplanar} the greedy spiral tree is planar, so we can apply Lemma~\ref{lem:DPleaforder} to the nodes in $\mathcal{W}$. Hence, when $\mathcal{C}$ passes over $p_{uv}$ and both nodes $u$
and $v$ are still active, then $u$ and $v$ must be neighbors in $\mathcal{W}$. We process the following events.
\begin{description}\itemsep0pt
  \item[Terminal.] When $\mathcal{C}$ reaches a terminal $t$, we add $t$ to $\mathcal{W}$. We need to check whether there exists a neighbor $v$ of $t$ in $\mathcal{W}$ such that $t \in \mathcal{R}_v$. If such a node $v$ exists, then we remove $v$ from $\mathcal{W}$ and connect $v$ to $t$. Finally we compute new join point events for $t$ and its neighbors in $\mathcal{W}$.
  \item[Join point.] When $\mathcal{C}$ reaches a join point $p_{uv}$ (and $u$ and $v$ are still active), we connect $u$ and $v$ to $p_{uv}$. Next, we remove $u$ and $v$ from $\mathcal{W}$ and we add $p_{uv}$ to $\mathcal{W}$ as a Steiner node. Finally we compute new join point events for $p_{uv}$ and its neighbors in $\mathcal{W}$.
\end{description}
We store the events in a priority queue $\mathcal{Q}$, ordered by decreasing distance to $r$. Initially $\mathcal{Q}$ contains all terminal events. Every join point event adds a
node to $T$ and every node generates at most two join point events, so the total number of events is $O(n)$. We can handle a single event in $O(\log n)$ time, so the total running
time is $O(n \log n)$.
Next we prove that the greedy spiral tree is an approximation of the optimal spiral tree.
%


\begin{lemma}
\label{lem:greedycircleisects} Let $\mathcal{C}$ be any circle centered at $r$ and let $T$ and $T'$ be the optimal spiral tree and the greedy spiral tree, respectively. Then $|\mathcal{C} \cap T'| \leq 2 |\mathcal{C} \cap T|$ holds where $|\mathcal{C} \cap T'|$ is the number of intersection points between $\mathcal{C}$ and $T'$.
\end{lemma}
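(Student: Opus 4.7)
The plan is to reformulate both sides of the inequality in terms of the sweep structure and then argue by a charging argument against arcs of the optimal tree $T$. Since every spiral arc is monotone in the distance to $r$ (Equation~\ref{eqn:spiraldist}), each arc of either tree crosses $\mathcal{C}$ at most once; thus $|T' \cap \mathcal{C}|$ equals the size $w$ of the wavefront $\mathcal{W}$ at the instant the sweep has radius $R$, while $k := |T \cap \mathcal{C}|$ decomposes the part of $T$ outside $\mathcal{C}$ into $k$ subtrees (which I will call \emph{channels}), each attached to $\mathcal{C}$ at a single crossing point. Every terminal at distance $>R$ lies in a unique channel.

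Next I would use planarity---Property~\ref{property:optplanar} for $T$ and Lemma~\ref{lem:greedyplanar} for $T'$---combined with the fact that arcs are monotone in radius, to establish that both trees induce a cyclic partition of the terminals outside $\mathcal{C}$ into contiguous blocks in the radial order around $r$. This is a radial-crossing analogue of Lemma~\ref{lem:DPleaforder}: the optimal tree produces $k$ cyclic blocks (one per channel) and the greedy tree produces $w$ cyclic blocks (one per wavefront subtree $P(u)$). The contiguity uses that two terminals in different channels cannot have their radial positions interleaved with terminals of a third channel without forcing an arc crossing.

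The crucial quantitative step is to show that at most two wavefront blocks can be contained in any single channel of $T$. I would argue by contradiction: if three consecutive wavefront nodes $u_{j-1}, u_j, u_{j+1}$ had $P(u_{j-1}), P(u_j), P(u_{j+1})$ all inside one channel $C_l$, then $T$ contains a Steiner node $s \in C_l$ (lying outside $\mathcal{C}$, so $|s| \ge R$) at which the optimal paths from two of these subtrees meet. Using the nesting $q \in \mathcal{R}_p \Rightarrow \mathcal{R}_q \subseteq \mathcal{R}_p$ together with the fact that greedy arcs are themselves angle-restricted (so each wavefront node inherits reachability relations from its descendant terminals), one should be able to promote $s$ to a witness point $q \in \mathcal{R}_{u_{j-1}} \cap \mathcal{R}_{u_j}$ with $|q| \ge R$. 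This forces $|p_{u_{j-1} u_j}| \ge R$, so greedy would already have joined $u_{j-1}$ and $u_j$ before the sweep reached $\mathcal{C}$; contradiction. A count then yields at most $k$ boundaries of the greedy partition that coincide with channel boundaries plus at most $k$ boundaries interior to channels, giving $w \le 2k$.

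The hard part will be the geometric step of promoting the optimal-tree Steiner ancestor $s$ to a witness in the common spiral region $\mathcal{R}_{u_{j-1}} \cap \mathcal{R}_{u_j}$ with radius at least $R$; the nesting of spiral regions combined with the angle-restricted shape of greedy arcs should make this possible, but writing it out rigorously requires a careful analysis of how the outward reachable regions of three consecutive wavefront nodes overlap inside a single channel of $T$. Once this local geometry is settled, the reformulation in terms of the wavefront, the contiguity of radial blocks, and the final accounting are all essentially bookkeeping.
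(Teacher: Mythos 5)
Your opening move is the same as the paper's: identify $|T'\cap\mathcal{C}|$ with the wavefront size $|\mathcal{W}|$ at radius $R$, and charge wavefront nodes to the crossings of the optimal tree with $\mathcal{C}$ using the fact that adjacent wavefront nodes have not yet been joined. But the route you take from there has a genuine gap, in fact two. First, the contiguity claim on which your whole accounting rests is unjustified and, I believe, false in general. A spiral segment descending from radius $R_1$ to $R_2$ changes angle by $\tan(\alpha)\ln(R_1/R_2)$, which is unbounded, so an arc of either tree can wind around $r$ by more than $2\pi$; planarity plus radial monotonicity therefore do \emph{not} force the terminal sets of the $k$ ``channels'' of $T$ (or of the wavefront subtrees of $T'$) to form contiguous blocks in the angular order of the terminals. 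The radial-order Lemma~\ref{lem:DPleaforder} you invoke as an analogue requires empty spiral regions, which the raw terminals do not have. The paper sidesteps angular positions of terminals entirely: the objects it orders and counts on $\mathcal{C}$ are the reachability arcs $I = \mathcal{R}_t\cap\mathcal{C}$ of carefully chosen terminals, which are genuine circular intervals regardless of winding.

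Second, the ``promotion'' step you flag as the hard part does not go through, and the obstruction is structural rather than technical. From a Steiner node $s$ of $T$ at radius $\geq R$ you obtain $s\in\mathcal{R}_t\cap\mathcal{R}_{t'}$ for descendant \emph{terminals} $t,t'$ of $u_{j-1},u_j$; but the nesting you cite gives $\mathcal{R}_{u_{j-1}}\subseteq\mathcal{R}_t$, i.e.\ the inclusion points the wrong way, so nothing about $\mathcal{R}_{u_{j-1}}\cap\mathcal{R}_{u_j}$ follows. This slack is precisely why greedy is a $2$-approximation and not optimal: once greedy has committed $t$ to a join point, its wavefront representative may genuinely be unable to reach where $T$ merges $t$ and $t'$. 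The paper's proof absorbs this loss instead of trying to rule it out: each wavefront node $u_i$ is represented by its two extreme terminals $u^L_i,u^R_i$ (whose spirals define $u_i$), only the \emph{facing} intervals of adjacent wavefront nodes are shown disjoint, each of the $2k$ intervals must contain some crossing $v_j$ of $T$ with $\mathcal{C}$, and after a reduction handling nested intervals each $v_j$ lies in at most $4$ of them, giving $2k\leq 4h$. Finally, even granting both of your claims, the arithmetic is off: ``at most two greedy blocks \emph{contained} in a channel'' only bounds the number of greedy boundaries interior to that channel by three, yielding $w\leq 4k$; for $w\leq 2k$ you would need at most one interior boundary per channel, i.e.\ at most two greedy blocks \emph{meeting} each channel, which is a strictly stronger statement than the one your contradiction argument targets.
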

\begin{proof}
It is easy to see that $|\mathcal{C} \cap T'| = |\mathcal{W}|$ when the sweeping circle is $\mathcal{C}$. Let the nodes of $\mathcal{W}$ be $u_1,
\ldots, u_k$, in radial order. Any node $u_i$ is either a terminal or it is the intersection of two spirals originating from two terminals, which we call $u^L_i$ and $u^R_i$ (see
Figure~\ref{fig:GreedyProof}). We can assume the latter is always the case, as we can set $u^L_i = u_i = u^R_i$ if $u_i$ is a terminal. Next, let the intersections of $T$ with
$\mathcal{C}$ be $v_1, \ldots, v_h$, in the same radial order as $u_1, \ldots, u_k$. As $T$ has the same terminals as $T'$, every terminal $u^L_i$ and $u^R_i$ must be able to reach a point $v_j$. Let
$I^L_i$ and $I^R_i$ be the reachable parts (intervals) of $\mathcal{C}$ for $u^L_i$ and $u^R_i$, respectively (that is $I^L_i = \mathcal{C} \cap \mathcal{R}_{u^L_i}$ and $I^R_i = \mathcal{C} \cap \mathcal{R}_{u^R_i}$). Since any two neighboring nodes $u_i$ and $u_{i+1}$ have not been joined by the greedy algorithm, we know that $I^L_i \cap I^R_{i+1} = \emptyset$. Now consider the
collection $\mathcal{S}_j$ of intervals that contain $v_j$. We always treat $I^L_i$ and $I^R_i$ as different intervals, even if they coincide. The union of all $\mathcal{S}_j$ has cardinality $2 k$. If $|\mathcal{S}_j| \geq 5$, then its intervals cannot be
consecutive (i.e. $I^L_i, I^R_i, I^L_{i+1}, I^R_{i+1}$, etc.), as this would mean it contains both $I^L_i$ and $I^R_{i+1}$ for some $i$. So say the intervals of $\mathcal{S}_j$ are not consecutive and $\mathcal{S}_j$ contains $I^L_{i}$
and $I^L_{i+1}$, but not $I^R_i$ (other cases are similar). $T'$ is planar, so this is possible only if $I^R_i \subset I^L_{i+1}$ (see Figure~\ref{fig:GreedyProof2}). But then
$I^R_i$ and $I^L_{i+1}$ are both in a collection $\mathcal{S}_{j'}$ and we can remove $I^L_{i+1}$ from $S_j$, while keeping the union of all collections the same. We repeat this process to construct reduced collections $\hat{\mathcal{S}}_j$ such
that the union of all collections remains the same and all intervals in a collection $\hat{\mathcal{S}}_j$ are consecutive. As a result, $|\hat{\mathcal{S}}_j| \leq 4$, and hence $4h \geq 2k$
or $k \leq 2h$.
\end{proof}

\begin{figure}[t]
  \centering
  \begin{minipage}[t]{.6\textwidth}
  \centering
  \includegraphics[scale=0.85]{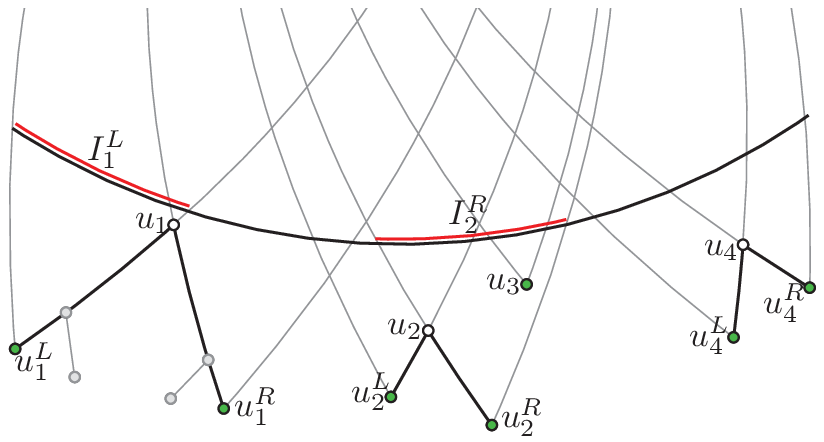}
  \caption{Nodes $u_i \in \mathcal{W}$, terminals $u^L_i, u^R_i$ and intervals $I^L_i, I^R_i$.}
  \label{fig:GreedyProof}
  \end{minipage}
  \hfill
  \begin{minipage}[t]{.35\textwidth}
  \centering
  \includegraphics[scale=0.85]{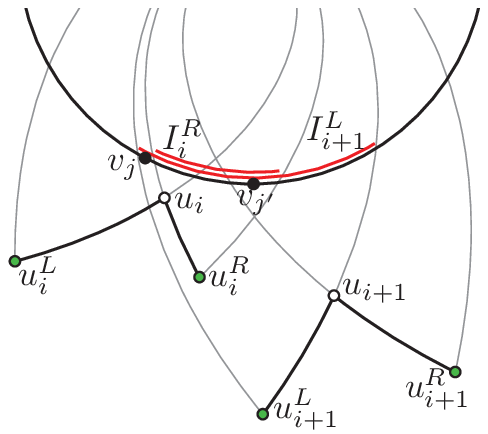}
  \caption{$I^R_i \subset I^L_{i+1}$.}
  \label{fig:GreedyProof2}
  \end{minipage}
\end{figure}

\begin{theorem}
The greedy spiral tree is a $2$-approximation of the optimal spiral tree and can be computed in $O(n \log n)$ time.
\end{theorem}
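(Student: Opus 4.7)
The approximation bound will follow by combining Lemma~\ref{lem:greedycircleisects} with the same integration-over-concentric-circles trick already used in the proof of Theorem~\ref{thm:fluxapprox}. Specifically, for any spiral tree $S$, Equation~\ref{eqn:spiraldist} implies that the total length can be written as
\[
L(S) = \sec(\alpha) \int_0^\infty |\mathcal{C}_R \cap S| \, dR,
\]
because each spiral arc contributes $\sec(\alpha)$ times its radial span and summing these contributions is exactly counting, at each radius $R$, how many arcs cross the circle $\mathcal{C}_R$ of radius $R$ centered at $r$.

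The plan is therefore: apply Lemma~\ref{lem:greedycircleisects} to $\mathcal{C}_R$ for every $R>0$, obtaining $|\mathcal{C}_R \cap T'| \leq 2 |\mathcal{C}_R \cap T|$ where $T$ is the optimal spiral tree and $T'$ the greedy one. Integrating this inequality and multiplying by $\sec(\alpha)$ yields
\[
L(T') = \sec(\alpha) \int_0^\infty |\mathcal{C}_R \cap T'| \, dR \leq 2 \sec(\alpha) \int_0^\infty |\mathcal{C}_R \cap T| \, dR = 2 L(T),
\]
which is the claimed approximation factor. A minor technical point worth flagging is that Lemma~\ref{lem:greedycircleisects} is stated for a circle at which the sweep is currently taking place, but the argument in its proof in fact holds for arbitrary $\mathcal{C}$ (one simply inspects $\mathcal{W}$ at the moment $\mathcal{C}$ is being swept over); I would just remark on this rather than re-prove it.

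For the running time bound I would recap the event-based analysis already sketched before the statement. The wavefront $\mathcal{W}$ is maintained as a balanced binary search tree sorted in radial order around $r$, and the priority queue $\mathcal{Q}$ stores events (terminal events and join-point events) ordered by decreasing distance to $r$. Each terminal contributes one terminal event, and each join-point event replaces two active nodes by one Steiner node, so the total number of join-point events is at most $n-1$; altogether there are $O(n)$ events. Inserting or deleting in $\mathcal{W}$ and $\mathcal{Q}$ costs $O(\log n)$, and computing a join point with neighbors in $\mathcal{W}$ is $O(1)$, so the total cost is $O(n \log n)$. Together with the approximation bound this gives the theorem.

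I do not expect a serious obstacle here: the combinatorial core (bounding $|\mathcal{C} \cap T'|$ against $|\mathcal{C} \cap T|$) is already handled by Lemma~\ref{lem:greedycircleisects}, and the length-as-integral identity is immediate from Equation~\ref{eqn:spiraldist}. The only thing that needs a careful sentence is the observation that the per-radius bound from Lemma~\ref{lem:greedycircleisects} applies uniformly in $R$, which is what legitimates integrating both sides of the inequality.
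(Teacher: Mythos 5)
Your proposal is correct and follows essentially the same route as the paper: the length-as-integral identity $L(S) = \sec(\alpha)\int_0^\infty |\mathcal{C}_R \cap S|\,dR$ combined with Lemma~\ref{lem:greedycircleisects} applied at every radius, plus the event-counting argument ($O(n)$ events, $O(\log n)$ each) for the running time. Note that Lemma~\ref{lem:greedycircleisects} is already stated for an arbitrary circle centered at $r$, so the technical caveat you flag is not actually needed.
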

\begin{proof}
The time bound is already mentioned above. For the approximation, recall that $L(T) = \sec(\alpha) \int_0^\infty |T \cap \mathcal{C}_R| dR$, where $T$ is any spiral tree and
$\mathcal{C}_R$ is the circle of radius $R$ centered at $r$. Using Lemma~\ref{lem:greedycircleisects}, we can directly conclude that the greedy spiral tree is a $2$-approximation
of the optimal spiral tree.
\end{proof}

\noindent The approximation factor is most likely not tight. Experiments for rectilinear Steiner arborescences show that the greedy algorithm
often computes near-optimal arborescences \cite{Cordova94}.

\section{Approximating spiral trees in the presence of obstacles}\label{sec:obstacles}

In this section we extend the approximation algorithm of Section~\ref{sec:approximation} to include obstacles. Given the similarities between spiral trees and rectilinear Steiner arborescences described in Section~\ref{sec:transformation}, it makes sense to consider existing algorithms for rectilinear Steiner arborescences in the presence of obstacles. Unfortunately, the only known algorithm for this seems to have some issues. We discuss these issues in the next section. Then we give a new algorithm for computing rectilinear Steiner arborescences in the presence of obstacles. For a certain type of obstacles, this algorithm also computes a $2$-approximation of the optimal rectilinear Steiner arborescence, although this does not hold for general obstacles. Finally we extend this algorithm to compute spiral trees in the presence of obstacles, again computing a $2$-approximation for a certain type of obstacles.

\subsection{Ramnath's algorithm}\label{sec:ramnath}

\begin{wrapfigure}[7]{r}{.15\textwidth}
  \centering
  \includegraphics[scale=.8]{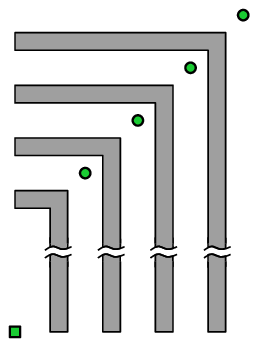}
\end{wrapfigure}
Ramnath~\cite{Ramnath03} gives a $2$-approximation algorithm for rectilinear Steiner arborescences with rectangular obstacles. He claims that the result extends to arbitrary rectilinear obstacles. But this is not the case. Consider the configuration of points and obstacles as seen on the right. The obstacles are $L$-shaped with the longer (vertical) side of the $L$ being much longer than the shorter (horizontal) one. Between each consecutive pair of obstacles their is a terminal. What Ramnath's algorithm does is to sweep a line of slope $-1$ starting at the root. During the sweep the arborescence is constructed greedily maintaining a minimal set of points (called \emph{cover points}) on the sweep line such that all remaining points can still be connected. Thus, in the beginning the algorithm has to decide whether to grow the arborescence to the right or upwards. From these two options the algorithm picks an arbitrary one, in particular it might grow to the right. But then on the arborescence will connect to each terminal by a connection corresponding to the longer side of the $L$-shape. By making the $L$-shape sufficiently long, the approximation factor for this configuration can be made worse than any constant, in particular two.

Ramnath's paper also lacks the details to establish the claimed running time for rectangular obstacles. In particular the subdivision of a critical region (that is, a region that can be exclusively reached by one of the cover points) seems to assume that there is no obstacle strictly inside the critical region. However this case might occur and it does not seem straightforward to extend the algorithm to handle this case. Furthermore, the algorithm needs to compute the point at which the critical regions of neighboring cover points meet. This point is found by tracing paths from both of the cover points. The cost of this tracing step does not seem to be handled in the analysis and it is not clear how to account for it.

\subsection{Rectilinear Steiner arborescences}\label{sec:rectisweep}

We are now given a root $r$ at the origin, terminals $t_1, \ldots, t_n$ in the upper-right quadrant, and also $m$ polygonal obstacles $B_1, \ldots, B_m$ with total complexity $M$. We place a bounding square around all terminals and the root and consider the ``free space'' between the obstacles as a polygonal domain $P$ with $m$ holes and $M+4$ vertices. We describe a greedy algorithm that computes a rectilinear Steiner arborescence $T$, the \emph{greedy arborescence}, inside $P$. Our algorithm returns only a topological representation of $T$. This can easily be extended to the explicit arborescence, which, however, can have arbitrarily high complexity.

As before we incrementally join nodes until we have a complete arborescence. This time we sweep a diagonal line $L$ over $P$ towards $r$ and maintain a wavefront $\mathcal{W}$ with all active nodes that $L$ has passed. If $L$ reaches a join point $p_{uv}$ of nodes $u, v \in \mathcal{W}$, we connect $u$ and $v$ to $p_{uv}$ and add the new Steiner node to $\mathcal{W}$. Our greedy arborescence is restricted to grow inside the polygonal domain $P$. If a point $p \in P$ cannot reach $r$ with a monotone path in $P$, then $p$ is not a suitable join point. To simplify matters we compute a new polygonal domain $P'$ from $P$, such that for every $p \in P'$, there is a monotone path from $p$ to $r$ in $P$. For now we simply assume that we are given $P'$ and that it has $O(M)$ vertices.

\begin{wrapfigure}[6]{r}{.2\textwidth}
  \centering
  \includegraphics{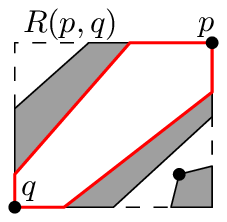}
\end{wrapfigure}
To compute join points we keep track of the reachable region of every node $u \in \mathcal{W}$, that is, we keep track of the part of $L$ that can be reached from $u$ via a monotone path in $P'$. As soon as two nodes $u, v \in \mathcal{W}$ can reach the same point $p$ on $L$, then $p$ is the join point $p_{uv}$ and we can connect $u$ and $v$ to $p_{uv}$. To compute the path between $u$ and $v$ and $p_{uv}$, we need some additional information. Here our definitions follow Mitchell~\cite{Mitchell92}. Given two points $p, q \in P'$ (with $x_q \leq x_p$
and $y_q \leq y_p$), let $R(p, q)$ be the rectangle with $p$ and $q$ as corners. We say that $q$ is \emph{immediately accessible} from $p$ if $p$
and $q$ are in the same connected component of $R(p, q) \cap P'$ and this connected component does not
contain any other vertices or nodes. 
The \emph{parent} of a point $p \in P'$
is the rightmost vertex or node from which $p$ is immediately accessible.
The topological representation of the greedy arborescence stores only the parent information.


The status of the sweep line $L$ consists of three types of intervals: (i) \emph{free intervals}: points that cannot be reached by any node in
$\mathcal{W}$, (ii) \emph{obstacle intervals}: points not in $P'$, and (iii) \emph{reachable intervals}: points reachable by a node in $\mathcal{W}$. The latter type of interval is tagged with the unique node in $\mathcal{W}$ that can reach this interval. We split the reachable intervals such that every interval has a
unique parent. The intervals are stored by their endpoints in a balanced binary search tree. Initially, the status of $L$ consists of one obstacle interval. We distinguish three types of events, which are processed in order using a priority queue.

\noindent{\bf Terminal event.} When we encounter a terminal $t_i$, there are two cases. Either the terminal is in a free interval or in a reachable interval tagged by a node $u$. In the latter case, we connect $u$ to $t_i$ (using the parent information) and replace $u$ by $t_i$ in $\mathcal{W}$. Also, we replace all intervals tagged with $u$ by free intervals and merge them where possible. In both cases, we start a new interval for $t_i$. For the endpoints of this interval, we trace the intersections between $L$ and the horizontal and vertical line through $t_i$. Note that we also split an interval, so we add three intervals in total and remove one. For every new interval (or merged interval), we add vanishing events to the event queue.

\begin{wrapfigure}[8]{r}{.25\textwidth}
\centering
\includegraphics{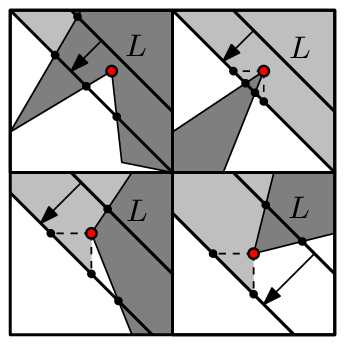}
\end{wrapfigure}
\noindent{\bf Vertex event.} When we encounter a vertex $v$, then $v$ can be in any type of interval. If $v$ is in a free interval, then we add an obstacle interval, where the endpoints of the interval trace the edges of $P'$ connected to $v$. If $v$ is in an obstacle interval, then we add a free interval, where the endpoints of the interval trace the edges of $P'$ connected to $v$. Otherwise, $v$ is in a reachable interval or at the endpoint between a reachable interval and an obstacle interval. In the first case, we need to insert an obstacle interval at $v$, as described above. In both cases we need to set the parent of $v$ and insert a new reachable interval for $v$ (with the correct tag). Also, we need to follow the edge or edges of $P'$ connected to $v$. This can create free intervals. If one of the endpoints of the reachable interval of $v$ directly moves out of $P'$, we do not need to add this endpoint, but we can use the endpoint of the obstacle interval instead. Note that we add only a constant number of intervals. For the new intervals, we add vanishing events to the event queue.

\noindent{\bf Vanishing Interval.} If an interval $I$ vanishes, then there are different cases depending on the types of the neighboring intervals $I_1$ and $I_2$. Note that $I$ vanishes at a point $p$ where two endpoints meet. If $I_1$ and $I_2$ are reachable intervals with different tags $u_1$ and $u_2$, then $p$ is the join point for $u_1$ and $u_2$. We join $u_1$ and $u_2$ at $p$, as described in the terminal event. Otherwise, we need to remove one of the two endpoints. An endpoint of an interval always follows an edge of $P'$ or a vertical or horizontal line through a node in $\mathcal{W}$ or a vertex of $P'$. If $I_1$ and $I_2$ are obstacle intervals or free intervals, then we can just remove both endpoints of $I$. If $I_1$ and $I_2$ are reachable intervals with the same tag, then we keep the endpoint that follows a horizontal line (this follows the definition of a parent given above).
If $I_1$ and $I_2$ are of different types, then we keep the endpoint of the obstacle interval if one is present and otherwise we keep the endpoint of the reachable interval. Again, we add vanishing event points to the event queue for every interval for which an endpoint has changed.

\smallskip
\noindent
The algorithm terminates when $L$ reaches $r$, at which point we have one node left in $\mathcal{W}$. Using the parent information in the status, we
connect the final node with $r$. To compute $P'$ from $P$ we simply run the sweep line algorithm in the opposite direction, tracing the ``reachable region'' of $r$. The points that border a reachable interval and either a free or obstacle interval trace out $P'$.

\begin{lemma}
\label{lem:arbruntime} The greedy arborescence can be computed in $O((n+M) \log(n+M))$ time.
\end{lemma}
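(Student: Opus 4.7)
The plan is to show that (i) the total number of events processed by the sweep is $O(n+M)$ and (ii) each event is handled in $O(\log(n+M))$ time, and then to argue that the preprocessing step that produces $P'$ fits in the same bound. Multiplying the two yields the claimed $O((n+M)\log(n+M))$ running time.

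First I would bound the number of events. There are exactly $n$ terminal events, and at most $O(M)$ vertex events, one per vertex of $P'$, assuming the claim that $P'$ has complexity $O(M)$. The delicate count is the number of vanishing events. The approach is a simple credit argument: count the total number of intervals that are ever inserted into the status structure. A terminal event inserts $O(1)$ new intervals (the split of one interval, together with the new reachable interval for the terminal), a vertex event inserts $O(1)$ new intervals (the new obstacle/free/reachable interval plus at most one split), and a vanishing event only inserts a new interval when it is a join event, in which case it inserts $O(1)$; all other vanishing events strictly decrease the number of intervals. Since the number of join events is at most $n-1$ (each join event reduces the size of $\mathcal{W}$ by one), the total number of intervals ever created is $O(n+M)$. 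Since every vanishing event destroys at least one interval, the number of vanishing events is also $O(n+M)$, so the total number of events is $O(n+M)$.

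Next I would analyze the cost per event. The status of the sweep line is kept in a balanced binary search tree keyed by the endpoints of the intervals along $L$, and the event queue is a standard priority queue ordered by the coordinate along the sweep direction. Each event touches only a constant number of intervals (the interval containing the event point and its two neighbors), requires a constant number of BST updates and priority-queue insertions/deletions, and computes a constant number of new tracing endpoints by intersecting $L$ with edges of $P'$ or axis-aligned half-lines through the affected vertices or nodes. Because the BST and the priority queue each contain $O(n+M)$ items at all times, each such operation costs $O(\log(n+M))$, so an event is processed in $O(\log(n+M))$ time.

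Finally, I would handle the preprocessing. The monotone-reachable subdomain $P'$ is obtained by running a symmetric sweep starting from $r$, tracing the reachable region of $r$; the same three interval types and the same event analysis apply, so the preprocessing runs in $O((n+M)\log(n+M))$ time, and the resulting $P'$ has complexity $O(M)$ because every new vertex of $P'$ is either a vertex of $P$ or a point where a traced edge first touches a vertex of $P$. Combining the event count with the per-event cost (and adding the preprocessing) yields the stated bound. The main technical obstacle is the amortized bound on the number of vanishing events: one must check case-by-case (in particular for the terminal event that wipes out all intervals tagged by some $u$ and for vertex events in reachable intervals) that the accounting really credits every newly created interval to either a terminal, a vertex, or a join, so that the total never exceeds $O(n+M)$.
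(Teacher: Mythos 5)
Your proposal follows essentially the same route as the paper's proof: bound the total number of intervals ever created by charging each creation to a terminal, vertex, or join event, conclude that there are $O(n+M)$ vanishing events, and pay $O(\log(n+M))$ per event, with the computation of $P'$ handled by a symmetric reverse sweep. One point needs correction, though. Your claim that ``each event touches only a constant number of intervals'' is false for the terminal event in which a terminal $t_i$ lands in a reachable interval tagged by a node $u$: at that moment \emph{all} intervals tagged with $u$ must be converted to free intervals and merged, which can be $\Theta(n_u)$ intervals for non-constant $n_u$. This is not an issue about how many intervals are \emph{created} (retagging creates none), so your closing remark---which locates the remaining difficulty in the accounting of newly created intervals---misplaces the gap. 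The paper handles it with a separate amortization: the retagging costs $O(n_u)$, but each interval can be changed to a free interval only once over the whole sweep, and each merge of adjacent free intervals destroys an interval, so the total retagging and merging work is charged against the $O(n+M)$ bound on intervals ever created. With that amortization added, your argument matches the paper's.
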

\begin{proof}
First we give a bound for the number of events. Clearly, the number of terminal and vertex events are bounded by $O(n+M)$. This also means that the total number of intervals is
bounded by $O(n+M)$, as we add a constant number of intervals at only these events. At every vanishing interval event we remove an interval, so the total number of events is
$O(n+M)$. Also note that every event can generate only a constant number of events. This also means that $P'$ has complexity $O(M)$, as we add vertices to $P'$ only at events. It is
easy to see that all events can be executed in $O(\log(n+M))$ time, except when we need to change all intervals tagged by a certain node $u$ to free intervals. We can do this in
$O(n_u)$ time (by simple bookkeeping), where $n_u$ is the number of intervals tagged by $u$. An interval can only once be changed to a free interval. Merging two neighboring free
intervals removes one interval, so we can charge these operations to the total number of intervals. Furthermore, the topological representation of the greedy arborescence contains only the relevant vertices and nodes to compute the paths between nodes. Every vertex or node can occur only once in this representation. So the algorithm runs in $O((n+M) \log(n+M))$ time.
\end{proof}

If $P$ has only \emph{positive monotone} holes, then the greedy arborescence is a $2$-approximation of the optimal rectilinear Steiner arborescence. A hole is
\emph{positive monotone} if its boundary contains two points $p$ and $q$ such that both paths on the boundary from $p$ to $q$ are monotone in both the $x$-direction and the $y$-direction. In the next section we prove this result for spiral trees. The same arguments can directly be applied to prove the same result for rectilinear Steiner arborescences.

\begin{theorem}
\label{lem:MSA}
The greedy arborescence can be computed in $O((n+M) \log(n+M))$ time. If $P$ has only positive monotone holes, then the greedy arborescence is a $2$-approximation of the optimal rectilinear Steiner arborescence.
\end{theorem}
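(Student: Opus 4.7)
The running-time bound is already established by Lemma~\ref{lem:arbruntime}, so the task is to prove the 2-approximation claim when every hole of $P$ is positive monotone. My plan is to mirror the counting argument of Lemma~\ref{lem:greedycircleisects}, using the sweep family of anti-diagonal lines $L_D:\, x+y=D$ in place of the circles $\mathcal{C}_R$. Observe first that for any rectilinear Steiner arborescence $T$ rooted at the origin and lying in the first quadrant, $L(T) = \int_0^\infty |L_D \cap T|\, dD$, since each horizontal or vertical segment contributes exactly its own length to this integral. Hence it suffices to prove the pointwise inequality $|L_D \cap T'| \leq 2\,|L_D \cap T|$ for every $D$, where $T'$ is the greedy arborescence and $T$ is an optimal one; integration then gives $L(T') \leq 2\,L(T)$.

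To prove the pointwise inequality I would fix a line $L = L_D$ and transfer the notation of Lemma~\ref{lem:greedycircleisects}: let $u_1,\dots,u_k$ be the nodes of $\mathcal{W}$ at the moment the sweep is at $L$, in their order along $L$, and let $u_i^L,u_i^R$ be the pair of (possibly identical) terminals that $u_i$ represents; let $v_1,\dots,v_h$ be the crossings of $T$ with $L$ in the same order, so $k=|L\cap T'|$ and $h=|L\cap T|$. For each $u_i^\sigma$ with $\sigma \in \{L,R\}$, define $I_i^\sigma = L \cap R^+(u_i^\sigma)$, where $R^+(p)$ denotes the set of points of $P$ reachable from $p$ by a monotone rectilinear path. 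Because $T$ is a monotone arborescence in $P$ containing every terminal, each $u_i^\sigma$ connects to $r$ through some $v_j$, forcing $v_j \in I_i^\sigma$.

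Granted that each $I_i^\sigma$ is a single interval of $L$, the rest is a direct transcription of the spiral case. The planarity of $T'$ (proved as in Lemma~\ref{lem:greedyplanar}: any crossing would have been discovered as a vanishing-interval event by the sweep before $L$ is reached) together with the fact that the greedy algorithm did not merge neighboring $u_i,u_{i+1}$ gives $I_i^L \cap I_{i+1}^R = \emptyset$. Collecting the intervals that contain each $v_j$ into $\mathcal{S}_j$ and applying the same reduction (if $I_i^R \subset I_{i+1}^L$, move $I_{i+1}^L$ to some $\mathcal{S}_{j'}$) yields refined collections $\hat{\mathcal{S}}_j$, each consisting of at most four consecutive intervals, so $2k = \sum_j |\hat{\mathcal{S}}_j| \leq 4h$, i.e.\ $k \leq 2h$.

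The crux, and the place I expect the main difficulty, is the single-interval claim, which is exactly where the positive monotone hypothesis is needed and where Ramnath's construction from Section~\ref{sec:ramnath} already shows that arbitrary obstacles fail: for a general hole, $R^+(u)$ can split $L$ into several components, the disjointness $I_i^L \cap I_{i+1}^R = \emptyset$ collapses, and the counting argument breaks down. I would establish the claim by induction over the holes, analyzing how $R^+(u)$ deforms around a single positive monotone hole: if the hole has boundary points $p,q$ along which both boundary arcs are coordinate-monotone, then the hole fits inside the axis-aligned rectangle spanned by $p$ and $q$, and the monotone-reachable region merely wraps around it along two staircase fronts without disconnecting. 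Applying this hole by hole keeps $R^+(u)\cap L$ connected, which is the only ingredient missing to complete the proof.
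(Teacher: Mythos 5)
Your overall route is exactly the paper's: the paper proves this theorem by observing that the spiral-tree argument carries over verbatim (Theorem~\ref{theo:greedy}), i.e., a length identity over a sweep family, the interval-counting argument of Lemma~\ref{lem:greedycircleisects}, and a single-interval reachability lemma playing the role of Lemma~\ref{lem:spiralmonotone}; your anti-diagonal lines $L_D$ are precisely the rectilinear analogue of the circles $\mathcal{C}_R$, and your integration identity and pointwise bound are the intended ones. The one place you genuinely diverge is the crux you flag yourself. The paper does not prove the single-interval claim by induction on the holes; the proof of Lemma~\ref{lem:spiralmonotone} is a direct path-tracing argument that translates verbatim to the rectilinear setting: from any point $p \in I_u \cap P$ walk outward along an axis-parallel ray (increasing $x+y$), and whenever a hole is hit, detour along its boundary toward its maximal corner --- which is a coordinate-monotone staircase precisely because the hole is positive monotone --- until one of the two extremal monotone paths $\pi_1,\pi_2$ from $u$ to the endpoints of $I_u$ is reached; reversing and concatenating yields a monotone path from $u$ to $p$ in $P$. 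This avoids analyzing how the reachable region deforms around several holes simultaneously, which is where a hole-by-hole induction gets delicate. One small correction to your formulation: since $L_D$ may pass through holes, $L_D \cap R^+(u)$ is not literally an interval; the correct statement (as in Lemma~\ref{lem:spiralmonotone}) is that all reachable points lie in a single interval $I_u$ and every point of $I_u \cap P$ is reachable --- which is all the counting argument needs, because the crossings $v_j$ lie in $P$.
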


\subsection{Spiral trees} \label{sec:spirobstacles}

We now describe how to adapt our algorithm to spiral trees; we concentrate mainly on the necessary changes.
We again compute only a topological representation of the output and refer to the spiral tree which we compute as the \emph{greedy spiral tree}. The sweep line is replaced by a sweeping circle $\mathcal{C}$. A simple balanced binary search tree is still sufficient to store the intervals, using special cases to deal with the circular topology.

\begin{wrapfigure}[6]{r}{.27\textwidth}
  \centering
  \includegraphics{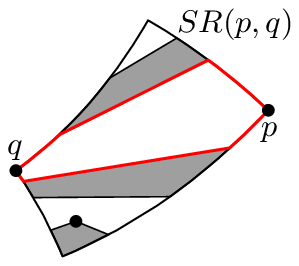}
\end{wrapfigure}
We need to replace horizontal and vertical lines by right and left spirals. For a given node or vertex $u$, the endpoints of its interval on $\mathcal{C}$ follow the intersections
of $\mathcal{S}^{+}_u$ and $\mathcal{S}^{-}_u$ with $\mathcal{C}$. Given two points $p$ and $q$ ($q \in \mathcal{R}_p$), let $SR(p,q)$ be the \emph{spiral rectangle} between $p$
and $q$. The spiral rectangle between $p$ and $q$ is bounded by the two paths (these are unique) consisting of exactly two spiral segments connecting $p$ to $q$ (this is exactly a
rectangle transformed by the transformation in Section~\ref{sec:transformation}). The point $q$ is \emph{immediately accessible} from $p$ if $p$ and $q$ are in the same connected
component of $SR(p, q) \cap P'$ and this connected component does not contain any other vertices or nodes.

\begin{wrapfigure}[8]{r}{.27\textwidth}
  \centering
  \includegraphics{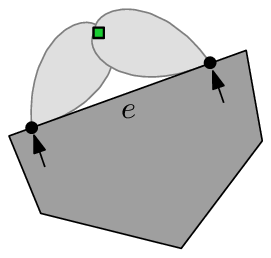}
  \small{\caption{Spiral points.\label{fig:SubdividePoly}}}
\end{wrapfigure}

There is one subtlety. If the left or right spiral of a vertex $v$ directly moves out of $P$, we can ignore it, as before. However, at the exact moment that this is no longer the case, we do need to trace this spiral. For rectilinear Steiner arborescences, this can only happen at vertices. For spiral trees, this can also happen at most two \emph{spiral points} in the middle of an edge $e$. A point $p$ on $e$ is a spiral point if the angle between the line from $p$ to $r$ and the line through $e$ is exactly $\alpha$. We hence subdivide every edge of $P$ at the spiral points. In addition we also subdivide $e$ at the closest point to $r$ on $e$ to ensure that every edge of $P$ has a single intersection with $C$.

Neither the algorithm presented in Section~\ref{sec:rectisweep} nor its adaptation to spiral trees gives a constant factor approximation. But, if we restrict the types of obstacles, they give 2-approximations. For rectilinear Steiner arborescences we have to use positive monotone obstacles, for spiral trees \emph{spiral monotone} obstacles. An obstacle is \emph{spiral monotone} if its boundary contains two points $p$ and $q$ such that both paths on the boundary from $p$ to $q$ are angle-restricted.


\begin{lemma}\label{lem:spiralmonotone}
Let $P$ be a polygonal domain with spiral monotone holes. Then all points on a circle $\mathcal{C}$ reachable from a node $u$ lie inside a single circular interval $I_u \subseteq \mathcal{C}$ with the property that every point in $I_u \cap P$ is reachable from $u$.
\end{lemma}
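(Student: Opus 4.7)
The plan is to define $I_u$ as the shortest closed circular arc on $\mathcal{C}$ that contains every point reachable from $u$ by an angle-restricted path in $P$, and then to establish two claims: (i) $I_u \subseteq \mathcal{R}_u \cap \mathcal{C}$, and (ii) every point of $I_u \cap P$ is reachable from $u$. Claim (i) is immediate: any angle-restricted path from $u$ stays inside $\mathcal{R}_u$, so its endpoint on $\mathcal{C}$ lies in the arc $\mathcal{R}_u \cap \mathcal{C}$ bounded by the first intersections of $\mathcal{S}^{+}_u$ and $\mathcal{S}^{-}_u$ with $\mathcal{C}$. Hence $I_u$ is an arc contained in a single spiral interval of $\mathcal{C}$.

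For claim (ii), I would fix the two extreme reachable endpoints $q_1,q_2$ of $I_u$, realized by angle-restricted paths $\pi_1,\pi_2$ in $P$, and pick any intermediate $q_3 \in I_u \cap P$ on the arc of $\mathcal{C}$ between them. After replacing $\pi_1,\pi_2$ by their sub-paths obtained by cutting at their last common point with each other, I may assume that $\pi_1$ and $\pi_2$ meet only at $u$, so together with the sub-arc of $\mathcal{C}$ through $q_3$ they bound a closed topological disk $D$ that contains $q_3$ on its boundary. Inside $D \cap P$ I would construct an angle-restricted path from $u$ to $q_3$ by a greedy ``hug-the-obstacle'' procedure: start along a shortest angle-restricted path from $u$ toward $q_3$ (which exists by Lemma~\ref{lem:shortestpath}, since $q_3 \in \mathcal{R}_u$); whenever this path first meets the boundary of a hole $B$ at a point $x$, detour along $B$ toward its innermost distinguished point $q_B$, in the direction that keeps the path inside $D$; upon clearing $B$, resume an angle-restricted approach to $q_3$. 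Spiral monotonicity of $B$ is precisely what makes each detour angle-restricted, since by definition both boundary arcs of $B$ from $p_B$ to $q_B$ are angle-restricted.

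The main technical obstacle will be verifying that this greedy construction terminates at $q_3$ and never has to exit $D$, especially when several spiral monotone obstacles lie in sequence or nest inside $D$. I would handle this by induction on the number of holes that intersect $D$: each completed detour either reduces this count, or else splits $D$ into a strictly smaller topological disk bounded by portions of $\pi_1$, $\pi_2$, the new detour arc along $\partial B$, and $\mathcal{C}$, to which the inductive hypothesis applies. A secondary point that must be checked is that the ``correct'' side of $B$ on which to detour always lies inside $D$; this follows from the planarity of angle-restricted paths together with the fact that $q_3$ itself lies in $P$ and is separated from $u$ by $B$, so that at least one of the two angle-restricted boundary arcs from $p_B$ to $q_B$ stays on the $q_3$-side of $B$ inside $D$.
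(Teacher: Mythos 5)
Your setup (defining $I_u$ as the minimal arc spanning the reachable points, and observing that it lies inside the single arc $\mathcal{R}_u \cap \mathcal{C}$) is fine, and you have correctly identified that spiral monotonicity is what makes detours along obstacle boundaries angle-restricted. The gap is in the core construction: you build the path \emph{forward} from $u$ toward $q_3$, and this direction does not work as described. When your path first hits a hole $B$ at a point $x$ on one of its two boundary arcs, the angle restriction (which forces the distance to $r$ to decrease strictly) leaves no choice: the only admissible continuation along $\partial B$ is inward along the arc containing $x$. So the side of the detour is \emph{forced}, not chosen ``to stay inside $D$.'' That forced arc may sweep a large angle before you can leave $B$, and by then $q_3$ need no longer lie in the spiral region of the current point --- even though $q_3$ is reachable from $u$ by a different route that avoids $B$ on the other side. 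Staying inside $D$ does not prevent this: a point of $D$ can easily have $q_3 \notin \mathcal{R}_y$. Your induction does not close this hole either, because the inductive hypothesis you would need is precisely that $q_3$ is sandwiched between two points of $\mathcal{C}$ reachable from the \emph{new} current point, which is the statement in doubt.

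The missing idea is to run the construction \emph{backwards}: start at $p = q_3$ and trace the left spiral through $p$ \emph{outwards} (away from $r$), and whenever you hit a hole, follow its boundary in the direction of increasing distance to $r$ until the left spiral can be resumed. Spiral monotonicity guarantees this boundary-following is the reversal of an angle-restricted path, so the entire reversed trace is angle-restricted. Crucially, there are now no choices to certify: the trace is radially monotone outward, so it cannot recross $\mathcal{C}$, cannot terminate inside $D$, and must therefore exit $D$ through $\pi_1$ or $\pi_2$ at some point $z$; concatenating the prefix of $\pi_i$ up to $z$ with the reversed trace gives the required angle-restricted path from $u$ to $q_3$ in $P$. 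If you want to keep a forward-style argument you would have to prove that some admissible choice of initial approach avoids the dead ends, which is essentially as hard as the lemma itself; the outward trace sidesteps this entirely.
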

\begin{proof}
Consider a point $p \in I_u \cap P$. Let $\pi_1$ and $\pi_2$ be the paths from $u$ to the endpoints of $I_u$. Repeat the following until we hit either $\pi_1$ or $\pi_2$. Move along the left spiral through $p$ going outwards (from $r$). When we hit a hole, simply follow the outline of the hole until we can follow the left spiral again. Because $P$ has only spiral monotone holes, we eventually reach either $\pi_1$ or $\pi_2$. Hence $p$ must be reachable from $u$.
\end{proof}

\begin{theorem}
\label{theo:greedy}
The greedy spiral tree can be computed in $O((n+M) \log(n+M))$ time. If $P$ has only spiral monotone holes, then the greedy spiral tree is a $2$-approximation of the optimal spiral tree.
\end{theorem}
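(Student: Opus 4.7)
The plan is to prove the two statements separately, adapting the running time analysis from Lemma~\ref{lem:arbruntime} and the approximation argument from Lemma~\ref{lem:greedycircleisects}, with Lemma~\ref{lem:spiralmonotone} serving as the bridge that lets the latter go through for spiral monotone obstacles.

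For the running time, I would mirror the proof of Lemma~\ref{lem:arbruntime} almost verbatim. The events to bound are: terminal events ($n$ of them), vertex events of $P'$ (which has complexity $O(M)$, since we subdivide each edge of $P$ at the two spiral points and at the closest point to $r$, adding only a constant number of vertices per edge), and vanishing interval events. A constant number of new intervals are created at each terminal or vertex event, so the total number of intervals, and hence of vanishing events, is $O(n+M)$. The BST operations and the endpoint tracing at each event take $O(\log(n+M))$ time; the amortization trick for turning intervals tagged by a removed node $u$ into free intervals (charging to the intervals created) carries over unchanged. Thus the algorithm runs in $O((n+M)\log(n+M))$ time.

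For the approximation factor I would reduce to the structure of the proof of Lemma~\ref{lem:greedycircleisects}. Pick any circle $\mathcal{C}$ centered at $r$, and let $T$ be the optimal spiral tree and $T'$ the greedy one. As before, $|\mathcal{C}\cap T'|$ equals the size of the wavefront $\mathcal{W}$ when $\mathcal{C}$ is the sweep curve, and each wavefront node $u_i$ descends from two terminals $u_i^L, u_i^R$ whose reachable intervals $I_i^L, I_i^R$ on $\mathcal{C}$ are what Lemma~\ref{lem:spiralmonotone} allows me to treat as connected circular intervals in $P'$. The greedy rule, combined with Lemma~\ref{lem:spiralmonotone} applied to $P'$, ensures that $I_i^L \cap I_{i+1}^R = \emptyset$ for neighboring wavefront nodes, because otherwise they would have joined earlier. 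Every terminal must still reach some intersection $v_j$ of $T$ with $\mathcal{C}$, so each $v_j$ lies in some subset $\mathcal{S}_j$ of these intervals. The exact same combinatorial reduction as in Lemma~\ref{lem:greedycircleisects}, using planarity of $T'$ (which follows by the same argument as Lemma~\ref{lem:greedyplanar}) and the disjointness condition, yields $|\hat{\mathcal{S}}_j|\leq 4$ after thinning. Summing gives $|\mathcal{C}\cap T'|\leq 2|\mathcal{C}\cap T|$, and integrating over $\mathcal{C}_R$ with Equation~\ref{eqn:spiraldist} in the same way as in the proof at the end of Section~\ref{sec:approximation} yields $L(T')\leq 2L(T)$.

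The main obstacle is verifying that Lemma~\ref{lem:spiralmonotone} is in fact the right hypothesis to make the disjointness property $I_i^L \cap I_{i+1}^R = \emptyset$ go through inside $P'$. Without spiral monotonicity, the reachable set of a node on $\mathcal{C}$ could split into several circular intervals, and then the greedy ``join at the farthest common point'' rule no longer implies that two wavefront neighbors have failed to share any reachable point on $\mathcal{C}$. I would therefore take care in the argument to state the reachability claim explicitly in the form: for each wavefront node $u$, the set $\mathcal{C}\cap\mathcal{R}_u \cap P'$ is a single circular arc, after which the combinatorial counting is essentially the same as in the obstacle-free case. The rest of the proof, including the construction of $P'$ (by running the sweep in reverse from $r$, as in Section~\ref{sec:rectisweep}) and the argument that its complexity is $O(M)$, is routine given the machinery already set up.
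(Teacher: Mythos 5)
Your proposal is correct and follows essentially the same route as the paper: the running time is inherited from Lemma~\ref{lem:arbruntime} together with the constant-per-edge subdivision at spiral points, and the approximation factor is obtained by invoking Lemma~\ref{lem:spiralmonotone} to get single reachable intervals and then rerunning the counting argument of Lemma~\ref{lem:greedycircleisects}. Your explicit remark that spiral monotonicity is needed precisely so that each wavefront node's reachable set on $\mathcal{C}$ is one circular arc (otherwise the disjointness $I_i^L \cap I_{i+1}^R = \emptyset$ fails) is exactly the role that lemma plays in the paper's proof.
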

\begin{proof}
Correctness and running time follow from Lemma~\ref{lem:arbruntime} and the discussion in Section~\ref{sec:spirobstacles}.
Assume that $P$ has only spiral monotone holes and let $T$ and $T'$ be the optimal and greedy spiral tree, respectively. By Lemma~\ref{lem:spiralmonotone} we can represent the part of a circle $\mathcal{C}$ that is reachable by a terminal $t$ as a single interval $I_t$. We can now follow the proof of Lemma~\ref{lem:greedycircleisects} with these intervals to show that $|\mathcal{C} \cap T'| \leq 2 |\mathcal{C} \cap T|$. This directly implies that, if $P$ has only spiral monotone holes, the greedy spiral tree is a $2$-approximation.
\end{proof}



\end{document}